\title{The Consistency of Probabilistic Databases with Independent Cells}
\definecolor{Gray}{gray}{0.85}
\newcolumntype{a}{>{\columncolor{Gray}}c}
\newcolumntype{e}{>{\columncolor{white}}c}
\newcolumntype{d}{>{\columncolor{Gray}\centering} p{1.9in}}
\def\thickhline{\Xhline{3\arrayrulewidth}}
\def\qedexample{\hfill $\blacklozenge$}
\author{Amir Gilad}{Duke University, Durham, NC, USA} {agilad@cs.duke.edu}{}{}
\author{Aviram Imber}{Technion -- Israel Institute of Technology} {aviram.imber@cs.technion.ac.il}{}{}
\author{Benny Kimelfeld}{Technion -- Israel Institute of Technology} {bennyk@cs.technion.ac.il}{}{}
\authorrunning{A.~Gilad, A.~Imber, and B.~Kimelfeld} 
\keywords{Probabilistic databases, attribute-level uncertainty, functional dependencies, most probable database} 
\newcommand{\defeq}{\vcentcolon=}
\newcommand{\eqdef}{=\vcentcolon}
\def\true{\mathsf{true}}
\def\false{\mathsf{false}}
\def\prob{\mathrm{Pr}}
\def\set#1{\mathord{\{#1\}}}
\def\angs#1{\mathord{\langle#1\rangle}}
\newcommand{\eat}[1]{}
\def\e#1{\emph{#1}}
\def\ra{\rightarrow}
\def\la{\leftarrow}
\def\lra{\leftrightarrow}
\def\true{\mathbf{true}}
\def\false{\mathbf{false}}
\def\closure#1#2{#1^+_{#2}}
\def\vals{\mathbf{Val}}
\def\atts{\mathbf{Att}}
\def\tids{\mathsf{tids}}
\def\support{\mathsf{supp}}
\def\fpsharpp{\mathord{\mathrm{FP}^{\mathrm{\#P}}}}
\def\sharpp{\mathrm{\#P}}
\def\uca#1{\mathord{{?}#1}}
\def\tA{\uca A}
\def\tB{\uca B}
\def\tC{\uca C}
\def\uatts{\mathord{\uca\atts}}
\def\att#1{\textsf{#1}}
\def\uatt#1{\uca{\att{#1}}}
\def\val#1{\texttt{#1}}
\begin{document}

\maketitle

\begin{abstract}
A probabilistic database with attribute-level uncertainty consists of relations where cells of some attributes may hold probability distributions rather than  deterministic content. Such databases arise, implicitly or explicitly, in the context of noisy operations such as missing data imputation, where we automatically fill in missing values, column prediction, where we predict unknown attributes, and database cleaning (and repairing), where we replace the original values due to detected errors or violation of integrity constraints. We study the computational complexity of problems that regard the selection of cell values in the presence of integrity constraints. More precisely, we focus on functional dependencies and study three problems: (1) deciding whether the constraints can be satisfied by any choice of values, (2) finding a most probable such choice, and (3) calculating the probability of satisfying the constraints. The data complexity of these problems is determined by the combination of the set of functional dependencies and the collection of uncertain attributes. We give full classifications into tractable and intractable complexities for several classes of constraints, including a single dependency, matching constraints, and unary functional dependencies.
\end{abstract}

\section{Introduction}\label{sec:intro}

\eat{
\begin{figure}[!htb]
\centering
    \begin{subfigure}[b]{0.4\textwidth}
     \centering
    \begin{enumerate}
        \item $specialist?,time \rightarrow room$ 
        \item $room,time \rightarrow specialist?$
    \end{enumerate}
    \vspace{-3mm}
    \caption{Functional Dependencies.}\label{fig:fds-poly}
    \end{subfigure}
    \caption{Most probable imputation with FD (1) (NP-Hard): $t_1$[specialist] = Dr. Bart, $t_1$[specialist] = Dr. Lisa, and $t_1$[specialist] = Dr. Maggie with probability 14.4\%. Most probable imputation with FDs (1) and (2) (PTIME): $t_1$[specialist] = Dr. Bart, $t_1$[specialist] = Dr. Lisa, and $t_1$[specialist] = Dr. Bart  with probability 9.6\%.}\label{fig:ex-hard}
\end{figure}
}

Various database tasks amount to reasoning about relations where attribute values are uncertain. To name a few, systems for \e{data cleaning} may detect errors and suggest alternative fixes with different confidence scores~\cite{RekatsinasCIR17,DBLP:conf/icdt/SaIKRR19,DBLP:conf/sigmod/HeidariMIR19}, approaches to \e{data repair} may suggest alternative values due to the violation of integrity constraints (e.g., key constraints and more general functional dependencies)~\cite{AndritsosFM06,DBLP:journals/tods/Wijsen05}, and algorithms for \e{missing-data imputation} may suggest a probability distribution over possible completions of missing values~\cite{DBLP:journals/jmlr/BiessmannRSNSTL19,DBLP:conf/sigmod/MayfieldNP10}. 
Such uncertainty is captured as a probabilistic database in the
so called \e{attribute-level uncertainty}~\cite{DBLP:series/synthesis/2011Suciu} (as opposed to the commonly studied \e{tuple-level uncertainty}~\cite{DBLP:conf/vldb/DalviS04}). 

We refer to a relation of a probabilistic database in the attribute-level uncertainty as a Cell-Independent Relation (CIR). 
A CIR is a probabilistic database with a single relation, where the content of a cell is a distribution over possible values, and different cells are probabilistically independent. 
The CIR is the correspondent of a relation in the \e{Tuple-Independent Database} (TID)  under the \e{tuple-level uncertainty}, where the existence of each tuple is uncertain (while its content is certain), and different tuples are probabilistically independent~\cite{DBLP:series/synthesis/2011Suciu}. In contrast, the tuples of a CIR always exist, but their content is uncertain. 
For illustration, Figure~\ref{fig:specialist-cir} depicts a CIR with uncertain information about specialists attending rooms (e.g., since their attendance is determined by noisy sensors). 
Some attributes (here $\att{room}$ and $\att{business}$) are certain and have deterministic values. The uncertain attributes (e.g., $\uca{\att{specialist}}$) are marked by a question mark and their cells have several options for values. 
We later explain how this distinction has crucial impact on the complexity of CIRs.

\begin{figure}[t]
\centering
\begin{subfigure}{0.5\linewidth}
    \centering \footnotesize
        \begin{tabular}{ c | c | a | c | c | c |}
          \cline{2-4} \textit{tid} & \att{room} & \multicolumn{1}{c|}{$\uca{\att{specialist}}$} & \att{time}  \\
            \thickhline 1 & \val{41}  & \val{Bart}(0.5) $\mid$ \val{Lisa}(0.5) & \val{5 PM} \\
            \hline 2 & \val{163} & \val{Bart}(0.7) $\mid$ \val{Lisa}(0.3) & \val{5 PM} \\
            \hline 3 & \val{41}  & \val{Bart}(0.2) $\mid$ \val{Maggie}(0.8) & \val{5 PM}\\
            \hline
        \end{tabular}
    \caption{CIR $U_1$ with uncertain $\att{specialist}$.}\label{fig:specialist-cir}
    \end{subfigure}
    \begin{subfigure}{0.48\linewidth}\footnotesize
    \parbox{1\linewidth}{\hrule
    \begin{align*}
    F_1\defeq \{ & \uca{\att{specialist}}\;\att{time}\ra\att{room}\}\\
    F_2\defeq \{ & \uca{\att{specialist}}\;\att{time}\ra\att{room}\,,\\
     & \att{room}\;\att{time}\ra\uca{\att{specialist}}\}
    \end{align*}
    \hrule
    }
    \caption{Sets $F_1$ and $F_2$ of functional dependencies\label{fig:specialist-fds}}
    \end{subfigure}
    \vskip1.5em
\begin{subfigure}{0.8\linewidth}
    \centering \footnotesize
        \begin{tabular}{ c | c | c | c | c | c |}
          \cline{2-4} \textit{tid} & \att{room} & $\uca{\att{specialist}}$ & \att{time}  \\
            \thickhline 1 & \val{41}  & \val{Lisa}  & \val{5 PM} \\
            \hline 2 & \val{163} & \val{Bart} & \val{5 PM} \\
            \hline 3 & \val{41}  & \val{Maggie}  & \val{5 PM}\\
            \hline
        \end{tabular}\quad\quad
\begin{tabular}{ c | c | c | c | c | c |}
          \cline{2-4} \textit{tid} & \att{room} & $\uca{\att{specialist}}$ & \att{time}  \\
            \thickhline 1 & \val{41}  & \val{Bart}  & \val{5 PM} \\
            \hline 2 & \val{163} & \val{Lisa} & \val{5 PM} \\
            \hline 3 & \val{41}  & \val{Bart}  & \val{5 PM}\\
            \hline
        \end{tabular}
        \caption{Samples $r$ (left) and $r'$ (right) of $U_1$.}\label{fig:specialist-samples}
        \end{subfigure}
    
    \caption{Running example: CIR, FDs, and samples.}
\end{figure}



A natural scenario, studied by previous work for the TID model~\cite{GVSBUDA14,DBLP:journals/tods/LivshitsKR20}, considers a probabilistic database in the presence of a given set of integrity constraints, and specifically, Functional Dependencies (FDs).
Such a scenario gives rise to several interesting computational challenges, and we focus here on three basic ones.
In the problem of \e{possible consistency}, the goal is to test for the existence of a possible world (with a nonzero probability) that satisfies the FDs. 
The problem of the \e{most probable database} (``MPD''~\cite{GVSBUDA14}) is that of finding a possible world that satisfies the FDs and has the highest probability. 
In the problem of computing the \e{probability of consistency}, the goal is to calculate the above probability exactly (beyond just deciding whether it is nonzero), that is, the probability that (a random sample of) the given CIR satisfies the underlying FDs.
We investigate the computational complexity of these three problems for the CIR model. Our results provide classifications of tractability for different classes of FDs.
Importantly, we show that, for the studied classes, the complexity of these problems is determined by two factors: \e{(1)} the location of the uncertain attributes in the FDs (left or right side), \e{and (2)} the combination of the FDs in the given set of constraints.

The three problems relate to each other in the following manner.
To solve MPD, we need to be able to solve the possible consistency problem. 
The analysis of the probability of consistency sheds light on the possible consistency problem (is it fundamentally harder to compute the probability than to just determine whether it is nonzero?), but its importance goes beyond that. As we explain in Section~\ref{sec:problems}, computing this probability is useful to any type of constraints over CIRs, as the tractability of this probability implies that we can efficiently sample correctly from the conditional space of consistent samples.

Our study adopts the standard yardstick of \e{data complexity}~\cite{DBLP:conf/stoc/Vardi82}, where we fix the relational schema and the set of functional dependencies. The schema mentions not only what attributes are in the header of the relation, but also \e{which attribute is certain and which attribute is uncertain}. The complexity of the problems can be different for different combinations of schema and constraints, and we aim for a detailed understanding of which combinations are tractable and which are not.

\begin{example}\label{ex:intro1}
Consider again the CIR $U_1$ in Figure~\ref{fig:specialist-cir} along with the FD set $F_1$ of Figure~\ref{fig:specialist-fds}, consisting of a single FD. The FD says that at a specific time, a specialist can be found in only one location. 
Figure~\ref{fig:specialist-samples} shows a consistent sample $r$ of $U_1$ whose probability is $\prob(r)=0.5\cdot 0.7\cdot 0.8 = 0.28$. In particular, this probability is nonzero and, so, $U_1$ is possibly consistent. 
This sample has a maximal probability among the consistent samples; therefore, $r$ is a most probable database for $U_1$. 
Now, consider the FD set $F_2$ shown in Figure~\ref{fig:specialist-fds}, where the first FD is the one of $F_1$ and the second states that no two specialists should be in the same room at the same time. The sample $r$ in Figure~\ref{fig:specialist-samples} is no longer consistent, but $r'$ (in the same figure) is a consistent sample and also a most probable database.
In fact, $r'$ is the only consistent sample in this case, so 
the probability of consistency for $F_2$ turns out to be that of $r'$.
\qedexample
\end{example}

In contrast to the state of affairs in the attribute-level uncertainty, for \e{tuple-level uncertainty}  much more is known about MPD (i.e., finding the most likely instance of a probabilistic database conditioned on conformance to a set of FDs). In the case of tuple-independent databases, 
Gribkoff, Van den Broeck, and Suciu~\cite{GVSBUDA14} established dichotomy in the complexity of MPD for sets of unary FDs. This dichotomy has been generalized by Livshits, Kimelfeld and Roy~\cite{DBLP:journals/tods/LivshitsKR20} to a full classification over all sets of FDs, where they also established that the problem is equivalent to finding a \e{cardinality repair} of an inconsistent database. Carmeli et al.~\cite{DBLP:conf/icdt/CarmeliGKLT21} showed that two tractable cases, namely a \e{single FD} and a \e{matching constraint}, remain tractable even if the FDs are treated as \e{soft constraints} (where every violation incurs a cost). 
In this work, we aim to bring our understanding of attribute-level uncertainty closer to  tuple-level uncertainty.

\subparagraph*{Results.}
We establish classification results on several classes of functional dependencies: singleton FDs, matching constraints (i.e., FD sets of the form $\set{X\ra Y,Y\ra X}$), and arbitrary sets of unary FDs. Each classification consists of three internal classifications---one for each of the three problems we study (possible consistency, most probable database, and the probability of consistency). In every case, finding a most probable database is tractable whenever possible consistency is tractable. There are cases where the probability of consistency is intractable in contrast to the tractability of the most probable database, but we did not find any case where the other direction holds (and we will be surprised if such case exists). 
We also establish some general conclusions beyond these classes. For example, in Theorem~\ref{thm:all-uncertain-all-hard} (of Section~\ref{sec:single-matching}) we claim that if we make no assumption 
that some attributes are certain, then possible consistency is hard for \e{every nontrivial set of FDs}.

\begin{example}
Reconsider the CIR $U_1$ in Figure~\ref{fig:specialist-cir} along with the FD set $F_1$ of Figure~\ref{fig:specialist-fds}, consisting of a single FD.
Our classification shows that, in general, finding a solution to the possible consistency problem for such an FD, with uncertain attributes on the left side, is NP-complete. 
Now, reconsider the FD set $F_2$ shown in Figure~\ref{fig:specialist-fds}, where the first FD is the one of $F_1$. Thus, $F_1 \subset F_2$, however, interestingly,
our results show that for sets with the structure of $F_2$, finding an MPD (and, hence, also solving possible consistency) is in polynomial time. Intuitively, the additional FD in $F_2$ constrains the uncertain attribute on the left side of the first FD, making the problem tractable.
Finally, computing the probability of consistency for sets with the structure of $F_1$ and $F_2$ is $\sharpp$-hard (or more precisely $\fpsharpp$-complete).
\qedexample
\end{example}

\subparagraph*{Related work.}

A most probable database is the same as the ``Most Likely Intention'' (MLI) in the framework of Probabilistic Unclean Databases (PUD) of De Sa et al.~\cite{DBLP:conf/icdt/SaIKRR19}, in the special case where the \e{intention model} demands hard constraints and the \e{realization model} applies random changes to cells independently in what they refer to as \e{parfactor/update} PUD. They showed that finding an MLI of a parfactor/update PUD generalizes the problem of finding an \e{update repair} of an inconsistent database with a minimum number of value changes.
In turn, finding a minimal update repair has been studied in the literature and several complexity results are known for special cases of FDs, such as hardness (e.g., for the FD set $\set{A\ra B,B\ra C}$ due to Kolahi and Lakshmanan~\cite{DBLP:conf/icdt/KolahiL09}) and tractability (e.g., for lhs-chains such as $\set{A\ra B, AD\ra C}$ due to Livshits et al.~\cite{DBLP:journals/tods/LivshitsKR20}). There are, though, substantial differences between finding a most probable consistent sample of a CIR and finding an optimal update repair of an inconsistent database, at least in the variations where complexity results are known. First, they allow to select \e{any} value (from an infinite domain) for a cell, in contrast to the distributions of the CIR that can limit the space of allowed values; indeed, this plays a major role in past repair algorithms (e.g., Proposition 5.6 of~\cite{DBLP:journals/tods/LivshitsKR20} and Algorithm \textsc{FindVRepair} of~\cite{DBLP:conf/icdt/KolahiL09}). Second, they allow to change the value of \e{any} attribute and do not distinguish between uncertain attributes (where changes are allowed) and certain ones, as we do here; this is critical since, again, without such assumptions the problem is intractable for every nontrivial set of FDs (Theorem~\ref{thm:all-uncertain-all-hard}).

The problem of possible consistency does not have a nontrivial correspondence in the tuple-independent database model since, there, if there is any consistent sample then the subset that consists of all deterministic tuples (i.e., ones with probability one) is such a sample. The probability of consistency might be reminiscent of the problem of \e{repair counting} that was studied for subset repairs~\cite{DBLP:journals/jcss/LivshitsKW21,DBLP:conf/pods/CalauttiLPS22a}. Besides the fact that subset repairs are about tuple-level uncertainty (and no probabilities are involved), here we do not have any notion of \e{maximality} (while a repair is required to be a maximal consistent subset).

A CIR can be easily translated into a relation of a \e{block-independent-disjoint} (BID) probabilistic database~\cite{ReS07}. 
In a BID, every relation is partitioned into independent blocks of mutually exclusive tuples, each associated with a probability. This model has also been studied under the terms \e{dirty database}~\cite{AndritsosFM06} and x-tuples~\cite{MaslowskiW13,ChengCX08,MoCLCY13}. 
This translation implies that every upper bound for BIDs applies to CIRs, and the contrapositive: every hardness result that we establish (e.g., for the most probable database) extends immediately to BIDs; yet, it does not imply the other direction. Moreover, we are not aware of any positive results on inference over BIDs regarding integrity constraints. In addition, the translation from a CIR to a BID loses the information of which attributes are certain and which are uncertain, and as aforesaid, if we allow every attribute to be uncertain then the problem is hard for every nontrivial set of FDs (Theorem~\ref{thm:all-uncertain-all-hard}). 

\subparagraph*{Organization.} The remainder of the paper is organized as follows. We begin with preliminary definitions and notation (Section~\ref{sec:prelims}). We then define the CIR data model (Section~\ref{sec:cir}) and the  computational problems that we study (Section~\ref{sec:problems}). Next, we describe our analysis for the case of singleton and matching dependencies (Section~\ref{sec:single-matching}), and then the case of unary functional dependencies (Section~\ref{sec:unary}). Lastly, we give concluding remarks (Section~\ref{sec:conclusions}). Some of the proofs are omitted for lack of space, and they can be found in the Appendix. 


\section{Preliminaries}\label{sec:prelims}
We begin with preliminary definitions and notation.

\subparagraph*{Relations.}
We assume countably infinite sets $\vals$ of values and $\atts$ of attributes. A \e{relation schema} is a 
finite set $R=\set{A_1,\dots,A_k}$ of attributes.
An \e{$R$-tuple} is a function $t:R\ra\vals$
that maps each attribute $A\in R$ to a value that we denote by $t[A]$. A \e{relation} $r$ is associated with a relation schema, denoted
$\atts(r)$, a finite set of tuple identifiers, denoted $\tids(r)$, and a mapping from $\tids(r)$ to 
$\atts(r)$-tuples. (Note we allow for duplicate tuples, as we do not assume that the tuples of different identifiers are necessarily different.) 
We say that $r$ is a relation \e{over} the relation schema $\atts(r)$.
We denote by $r[i]$ the tuple that $r$ maps to the identifier $i$. Hence, $r[i][A]$ is the value that tuple $i$ has for the attribute $A$. 
As an example, Figure~\ref{fig:specialist-samples} (left) depicts a relation $r$ with
$\atts(r)=\set{\att{room},\uca{\att{specialist}}, \att{time}}$ 
(for now, the question mark in $\uca{\att{specialist}}$ should be ignored.) Here, $\tids(r)=\set{1,2,3}$ and $r[1][\att{room}]=\val{41}$. 

Suppose that $X$ is a set of attributes.
We denote by $\pi_X r$ the projection of $r$ onto $X$. More precisely, $\pi_X r$ is the relation $r'$ such that $\atts(r')=X$, $\tids(r')=\tids(r)$, and $r'[i][A]=r[i][A]$ for every $A\in\atts(r)\cap X$.
Observe that in our notation, $(\pi_X r)[i]$ is the projection of tuple $i$ to $X$.
As a shorthand notation, we write $r[i][X]$ instead
of $(\pi_X r)[i]$. For example, in Figure~\ref{fig:specialist-samples} we have $r[2][\att{room}\;\uca{\att{specialist}}]=(\val{163},\val{Bart})$.


\subparagraph*{Functional dependencies.}

A \e{functional dependency}, or FD for short, is an expression of the form $X\ra Y$ where $X$ and $Y$ are finite sets of attributes.
We say that $X\ra Y$ is \e{over} a relation schema $R$ if $R$ contains all mentioned attributes, that is, $X\cup Y\subseteq R$.
A relation $r$ satisfies the FD $X\ra Y$ over $\atts(r)$ if every two tuples that agree on $X$
also agree on $Y$. In our notation, we say that $r$ satisfies 
$X\ra Y$ if for every two tuple identifiers 
$i$ and $i'$ in $\tids(r)$ it holds that
$r[i][Y]=r[i'][Y]$ whenever $r[i][X]=r[i'][X]$.
A relation $r$ satisfies a set $F$ of FDs over $\atts(r)$, denoted 
$r \models F$, if $r$ satisfies every FD in $F$.

We use the standard convention that in instances of $X$ and $Y$ we may remove curly braces and commas. 
To compactly denote a set of FDs, we may also intuitively combine multiple FD expressions and  change the direction of the arrows. For example, the notation $A\lra B\la CD$ is a shorthand notation of 
$\set{A\ra B,B\ra A, CD\ra B}$. 

An FD $X \ra Y$ is \e{unary} if $X$ consists of a single attribute, and it is \e{trivial} if $Y\subseteq X$ (i.e., it is satisfied by every relation).  A
\e{matching constraint} (as termed in past work~\cite{DBLP:conf/icdt/CarmeliGKLT21}) is a constraint of the form $X\lra Y$, that is, the set $\set{X\ra Y,Y\ra X}$. 

The \e{closure} $F^+$ of a set $F$ of FDs is the set of all FDs that are implied by $F$ (or, equivalently, can be inferred by repeatedly applying the axioms of Armstrong). For example, $F^+$ includes all of the trivial FDs. The closure $\closure{X}{F}$ of a finite set $X$ of attributes is the set of all attributes $A$ such that $X\ra A$ is in
$F^+$. Two finite attribute sets $X$ and $Y$ are \e{equivalent} (w.r.t.~$F$) if $\closure{X}{F}=\closure{Y}{F}$, or in other words,
$X\ra Y$ and $Y\ra X$ are both in $F^+$. By a slight abuse of notation, we say that two attributes $A$ and $B$ are
\e{equivalent} if $\set{A}$ and $\set{B}$ are equivalent.
Finally, if $F$ is a set of FDs, then we denote by
$\atts(F)$ the set of all attributes that occur in either the left or right sides of rules in $F$.


\subparagraph*{Probability distributions.}
We restrict our study in this paper to finite probability spaces
$(\Omega,\pi)$ where $\Omega$ is a nonempty finite set of \e{samples} and 
$\pi:\Omega\ra [0,1]$ is a probability function satisfying $\sum_{o\in\Omega}\pi(o)=1$. 
The \e{support} of $\delta=(\Omega,\pi)$, denoted $\support(\delta)$, is the set of samples 
$o\in\Omega$ such that $\pi(o)>0$. 
We denote by $\prob_\delta(o)$ the probability $\pi(o)$.
We may write just $\prob(o)$ when $\delta$ is clear from the context.

\begin{figure}
    \centering \footnotesize
        \begin{tabular}{ c | c | a | a | c | c |}
          \cline{2-4} \textit{tid} & \att{business} & \multicolumn{1}{c}{$\uca{\att{spokesperson}}$} & \multicolumn{1}{c|}{$\uca{\att{location}}$} \\
            \thickhline $1$ & \val{S.~Propane}  & \val{Mangione}(0.6) $\mid$ \val{Strickland}(0.4) & \val{Arlen}(0.6) $\mid$ \val{McMaynerberry}(0.4) \\
            \hline $2$ & \val{Mega Lo Mart} & \val{Mangione}(0.45) $\mid$ \val{Thatherton}(0.55) & \val{Arlen}(0.5) $\mid$ \val{McMaynerberry}(0.5) \\
            \hline $3$ & \val{Mega Lo Mart} & \val{Mangione}(0.4) $\mid$ \val{Buckley}(0.6) & \val{Arlen}(0.55) $\mid$ \val{McMaynerberry}(0.45)\\
            \hline \multirow{2}{*}{$4$} & \multirow{2}{*}{\val{Get In Get Out}} & \multirow{2}{*}{\val{Peggy}(1.0)} &
            \multicolumn{1}{d|}{\val{Arlen}(0.35) $\mid$ \val{McMaynerberry}(0.3) $\mid$ \val{Dallas}(0.35)} \\
            \hline
        \end{tabular}
    \caption{CIR $U_2$ with spokesperson and location as the uncertain attributes.    }\label{fig:spokesperson-cir}
    \end{figure}

\section{Cell-Independent Relations}\label{sec:cir}
A \e{Cell-Independent Relation}, or \e{CIR} for short, is similar to an ordinary relation, except that in certain attributes the values may be probabilistic; that is, instead of an ordinary value, each of them contains a probability distribution over values. One could claim that the model should allow every attribute to have uncertain values. However, knowing which attributes are certain has a major impact on the complexity of operations over CIRs. Formally, a CIR $U$ is defined similarly to a relation, with the following differences:
\begin{itemize}
    \item The schema of $U$, namely $\atts(U)$, has \e{marked attributes} where uncertain values are allowed. We denote a marked attribute using a leading question mark, as in $\uca A$, and the set of marked  attributes  by $\uatts(U)$. (Note that  $\uatts(U)$ is a subset of $\atts(U)$.)
    \item For every $i\in\tids(U)$ and marked attribute $\uca A\in \uatts(U)$, the cell $U[i][\uca A]$ is a probability distribution over $\vals$.
\end{itemize}

By interpreting cells as probabilistically independent, 
a CIR $U$ represents a probability distribution over ordinary relations. Specifically, a sample of $U$ is a relation that is obtained from $U$ by sampling a value for each uncertain cell. More formally, a sample of $U$ is a relation $r$ such that $\atts(r)=\atts(U)$, $\tids(r)=\tids(U)$, and for every $i\in\tids(r)$ and unmarked attribute $A$ we have that $r[i][A]=U[i][A]$.

The probability $\prob_U(r)$ of a sample $r$ of $U$ is the product of the probabilities of the values chosen for $r$:
\[\prob_U(r)=\prod_{i\in\tids(U)}\prod_{\uca A\in \uatts(U)}
\prob_{U[i][\uca A]}(r[i][\uca A])
\]
Note that $\prob_{U[i][\uca A]}(r[i][\uca A])$ is the probability of the value $r[i][\uca A]$ (i.e., the value that tuple $i$ of $r$ has for the attribute $\uca A$) according to the distribution $U[i][\uca A]$ (i.e., the distribution that 
tuple $i$ of $U$ has for the attribute $\uca A$).

\begin{example}\label{example:cir}
Figures~\ref{fig:specialist-cir} and~\ref{fig:spokesperson-cir} depict examples $U_1$ and $U_2$, respectively, of CIRs. 
$U_1$ has been discussed in Example \ref{ex:intro1} and $U_2$ describes a CIR that stores businesses along with their spokespeople and headquarters locations. Some information in $U_2$ is noisy (e.g., since the rows are scraped from Web pages), and particularly the identity of the spokesperson and the business location. 
$U_1$ has a single uncertain attribute, namely $\uatt{specialist}$, and $U_2$ has two uncertain attributes, namely $\uatt{spokesperson}$ and $\uatt{location}$. In particular, we have:
$$\atts(U_1)=\set{\att{room},\uca{\att{specialist}}, \att{time}} 
\quad\quad \uatts(U_1)=\set{\uatt{specialist}}$$ 
Distributions over values are written straightforwardly in the examples. For example, the distribution $U_1[2][\uatt{specialist}]$ is the uniform distribution that consists of $\val{Bart}$ and $\val{Lisa}$, each with probability $0.5$. 

The relations $r$ and $r'$ of Figure~\ref{fig:specialist-samples} are samples of $U_1$. By the choices made in $r$, the probability
$\prob_U(r)$ is $0.5\cdot 0.7\cdot 0.8$. 
Note that the probability of $r$ is smaller than the probability of the sample where the specialists are $\val{Lisa}$, $\val{Bart}$ and $\val{Maggie}$, for instance, respectively.
\qedexample 
\end{example}

\subparagraph{Simplified notation.}
In the analyses that we conduct in later sections, we may simplify the notation when defining a CIR $U$. When $\atts(U)=\set{A_1,\dots,A_k}$, we may introduce a new tuple $t[i]$ with $t[i][A_\ell]=a_\ell$ simply as
$(a_1,\dots,a_k)$, assuming that the attributes are naturally ordered alphabetically by their symbols. For example, if 
$\atts(U)=\set{A,B,C}$, then $(a,b,c)$ corresponds to the tuple that maps $A$, $B$ and $C$ to $a$, $b$ and $c$, respectively. We can also use a distribution $\delta$ instead of a value $a_\ell$. In particular, we write
$b_1|\dots|b_t$ to denote a uniform distribution among the values $b_1,\dots,b_t$.
\begin{example}
Continuing Example~\ref{example:cir}, in the simplified notation the tuple $U_1[2]$ can be written as $(\val{163},\val{Bart}|\val{Lisa},\val{5 pm})$ since the attributes are ordered lexicographically and, again, the distribution happens to be uniform. \qedexample
\end{example}

\paragraph*{Consistency of CIRs}
Let $F$ be a set of FDs and let $U$ be a CIR, both over the same schema.
A \e{consistent sample} of  $U$ is a relation $r\in\support(U)$ such that $r\models F$. We say that $U$ is \e{possibly consistent} if at least one consistent sample exists. 
By \e{the probability of consistency}, we refer to the probability $\prob_{r\sim U}(r\models F)$ that a random sample of $U$ satisfies $F$. As a shorthand notation, we denote this probability by $\prob_U(F)$. Note that $U$ is possibly consistent if and only if $\prob_U(F)>0$. A consistent sample $r$ is a \e{most probable database} (using the terminology of Gribkoff, Van den Broeck and Suciu~\cite{GVSBUDA14}) if $\prob(r)\geq\prob(r')$ for every other consistent sample $r'$.

\begin{example}
Consider the CIR $U_1$ of Figure~\ref{fig:specialist-cir}. Let $F_1$ 
be that of Figure~\ref{fig:specialist-fds}, saying that at a specific time, a specialist can be found in only one location.
Figure~\ref{fig:specialist-samples} (left) shows a consistent sample $r$ of $U_1$. Then
$\prob(r)=0.5\cdot 0.7\cdot 0.8 = 0.28$. In particular, this probability is nonzero, hence $U_1$ is possibly consistent. The reader can verify that $r$ has a maximal probability among the consistent samples (and, in fact, among all samples); therefore, $r$ is a most probable database for $U_1$. To calculate the probability of consistency, we will take the complement of the probability of \e{inconsistency}. An inconsistent sample can be obtained in two ways:
\e{(1)} selecting $\val{Lisa}$ in both the first and second tuples,  \e{or (2)} selecting \val{Bart} in the second tuple and in at most one of the first and the third (which we can compute as the complement of the product of the probabilities of selecting the others). Therefore,
$$\prob_{U_1}(F_1) = 1-\big(0.3\cdot 0.5 + 0.7\cdot (1-0.5\cdot 0.8)\big) \,.$$

Now suppose that we use $F_2$ of Figure~\ref{fig:specialist-fds} saying that, in addition to $F_1$, a room can host only one specialist at a specific time.
In this case, $r$ is no longer a consistent sample since Room \val{41} hosts different specialists at \val{5 PM}, namely \val{Lisa} and \val{Maggie}. The reader can verify that the only consistent sample now is $r'$ of Figure~\ref{fig:specialist-samples}.
In particular, $U_1$ remains possible consistent, the sample $r'$ is the most probable database, and the probability of consistency is the probability of $r'$, namely $0.5\cdot 0.3\cdot 0.2$.
\qedexample
\end{example}
\section{Consistency Problems}\label{sec:problems}
We study three computational problems in the paper, as in the following definition. 

\begin{definition}\label{def:problems}
Fix a schema $R$ and a set $F$ of FDs over $R$. In each of the following problems, we are given as input a CIR $U$ over $R$:
\begin{enumerate}
    \item \underline{Possible-consistency}: determine whether $\prob_U(F)>0$.
    \item \underline{Most probable database}: find a consistent sample with a maximum probability.
    \item \underline{Probability of consistency}: calculate $\prob_U(F)$.
\end{enumerate}
\end{definition}


Observe that these problems include the basics of probabilistic inference: \e{maximum likelihood} computation and \e{marginal probability} calculation. An MPD can be viewed as an optimal completion of missing values, or an optimal correction of values suspected of being erroneous, assuming the independence of cells (as a \e{prior} distribution) and conditioned on satisfying the constraints (as a \e{posterior} distribution). A necessary condition for the tractability of the most probable database is possible consistency, where we decide whether at least one consistent sample exists. The problem of computing the probability of consistency can be thought of as a basic problem that sheds light on possible consistency. For example, if possible consistency is decidable in polynomial time in some case, is it because we can, generally, compute the probability of consistency or because there is something fundamentally easier with feasibility? We will see cases that feature both phenomena. 

A more technical reason to why we wish to be able to compute the probability of consistency is that it guarantees the ability to \e{sample} soundly from the conditional probability distribution (the posterior), that is, have an efficient randomized algorithm that produces a consistent sample $r$ with the probability $\prob_U(r\mid r\models F)$. The idea is quite simple and applies to every condition $F$ over databases, regardless of being FDs  (and was used in different settings, e.g.,~\cite{DBLP:journals/tods/CohenKS09}). For completeness of presentation, we give the details in the Appendix. 

\begin{toappendix}
\section{Sampling Consistent Relations}
As stated in Section~\ref{sec:problems}, we now explain how computing the probability of consistency can help to sample correctly (i.e., with the correct probability) from the conditional probability where the condition is the satisfaction of a set $F$ of constraints. 
 Suppose that $X_1,\dots,X_N$ are the random elements that represent the distributions of the uncertain cells of a CIR $U$. To produce a random sample $r$ with probability $\prob(r\mid F)$, we sample from the $X_i$ one by one, from $X_1$ to $X_N$. When we sample from $X_j$,  the probability of each value $a$ is adjusted to be 
$\prob(X_i=a\mid F \land X_1,\dots,X_{j-1}=a_1,\dots,a_{j-1})$
where $a_1,\dots,a_{j-1}$ are the values chosen already for $X_1,\dots,X_{j-1}$.\footnote{This is true because the probability of a sample $r$ of $U$ can be written as $\prob(X_1,\dots,X_N=a_1,\dots,a_N\mid F)$, which is equal to $\prod_{j=1}^N \prob(X_j=a_j\mid F\land X_1,\dots,X_{j-1}=a_1,\dots,a_{j-1})$.} 
We can compute these adjusted probabilities if we know how to compute the probability of consistency. Specifically, by application of the Bayes rule, the adjusted probability can be represented as:
$$
\frac
{\prob(X_1,\dots,X_{j}=a_1,\dots,a_{j-1},a)\cdot\prob(F\mid X_1,\dots,X_{j}=a_1,\dots,a_{j-1},a)}
{\prob(X_1,\dots,X_{j-1}=a_1,\dots,a_{j-1})\cdot\prob(F\mid X_1,\dots,X_{j-1}=a_1,\dots,a_{j-1})}
$$
Then, the probabilities $\prob(F\mid X_1,\dots,X_{\ell}=b_1,\dots,b_{\ell})$ are simply $\prob_{U'}(F)$ where $U'$ is obtained from $U$ by replacing each $X_j$ with the deterministic $b_j$. 
\end{toappendix}

As aforesaid, the second and third problems are at least as hard as the first one: finding a most probable database of $U$ requires knowing whether $U$ is possibly consistent, and calculating the exact probability is at least as hard as determining whether it is nonzero. There is no reason to believe a-priori that their complexities are comparable. Yet, our analysis will show that the third has the same or higher complexity in the situations that we study.

\subsection{Complexity Assumptions}
In our complexity analysis, we will restrict the discussion to uncertain cells that are finite distributions represented explicitly by giving a probability for each value in the support. Note that if all uncertain cells of $U$ have a finite distribution, then $U$ has a finite set of samples. Yet, its size can be exponential in the number of rows of $U$ (and also in the number of columns of $U$, though we will treat this number as fixed as we explain next), even if each cell distribution is binary (i.e., has only two nonzero options). Every probability is assumed to be a rational number that is represented using the numerator and the denominator.

We will focus on the \e{data complexity} of problems, which means that we will make the assumption that the schema $R$ of the CIR and the set $F$ of FDs are both fixed. Hence, every combination $(R,F)$ defines a separate computational problem, and different pairs $(R,F)$ can potentially have different complexities.

\subsection{Preliminary Observations}
In the following sections, we study the complexity of the three consistency problems that we defined in Definition~\ref{def:problems}. 
Before we move on to the actual results, let us state some obvious general observations.
\begin{itemize}
    \item Possible consistency is in NP, since we can verify a ``yes'' instance $U$ in polynomial time by verifying that a relation $r$ is a consistent sample. 
    \item If possible consistency is NP-complete for some schema $R$ and set $F$ of FDs,  then it is NP-hard to find a most probable database, and it is NP-hard to compute the probability of consistency.
    \item We will show that the probability of consistency can be \#P-hard, or more precisely $\fpsharpp$-complete.\footnote{Recall that $\fpsharpp$ is the class of functions that are  computable in polynomial time with an oracle to a problem in $\sharpp$ (e.g., counting the number of satisfying assignments of a propositional formula).  This class is considered intractable, and above the polynomial hierarchy~\cite{DBLP:journals/siamcomp/Toda91}.}
Membership in $\fpsharpp$ of the probability of consistency is based on our assumption that probabilities are represented as rational numbers, and it can be shown using standard techniques (e.g.,~\cite{DBLP:conf/pods/GradelGH98,DBLP:conf/icdt/AbiteboulCKNS10}) that we do not repeat here. 
\end{itemize}
We will take the above for granted and avoid repeating the statements throughout the paper.
\section{Singleton and Matching Constraints}\label{sec:single-matching}
In this section, we investigate the complexity of the three problems we study in two special cases: a singleton constraint $\set{X\ra Y}$  and a \e{matching constraint} $X\lra Y$ (as it has been termed in past work~\cite{DBLP:conf/icdt/CarmeliGKLT21}). We give full classifications of when such constraints are tractable and intractable for the three problems. We note that we leave open the classification of the entire class of FD sets, but we provide it for the general case of unary FDs in Section~\ref{sec:unary}. 

We begin with the case of a binary schema, where every set of FDs is equivalent to either a singleton or a matching constraint. 

\subsection{The Case of a Binary Schema}\label{sec:binary}
Throughout this section, we assume that the schema is $\set{A,B}$. The complexity of the different cases of FDs is shown in Table~\ref{table:complexity:binary}. To explain the entries of the table, let us begin with the tractable cases.

\begin{table}
\centering
\caption{Complexity of the consistency problems for a binary schema. ``Possibility'' refers to possible consistency, ``MPD'' refers to the most probable database problem, and ``Probability'' refers to the probability of consistency.
\label{table:complexity:binary}}
\begin{tabular}{|c|c|c|c||c|}\hline
\textbf{FDs} & \textbf{Possibility} & \textbf{MPD} & \textbf{Probability} & \textbf{Propositions}  \\  
\hline
$A\ra\uca B$ & PTime & PTime & PTime & \ref{prop:A-ra-tB-poly} \\
$\uca A\ra B$ & NP-complete & NP-hard & $\fpsharpp$-complete & \ref{prop:hardness-tildeA-ra-B-and-tildaA-lra-tildeB} \\
$A\lra\uca B$ & PTime & PTime & $\fpsharpp$-complete & \ref{prop:A_lra_tildeB_poly} (PTime), \ref{prop:hardness-A-lra-tildeB} ($\fpsharpp$-c.) \\
$\uca A\lra \uca B$ & NP-complete & NP-hard & $\fpsharpp$-complete & \ref{prop:hardness-tildeA-ra-B-and-tildaA-lra-tildeB} \\
\hline
\end{tabular}
\end{table}

\subsubsection{Algorithms}\label{sec:algorithms-binary}
In this section, we show algorithms for $A\ra\uca B$ and for $ A\lra\uca B$.

For $A\ra\uca B$, we need to determine a value $b$ for each value $a$ of the attribute $A$. The idea is that we do so independently for each $a$. Let $V_A$ be the active domain of the attribute $A$ of $U$, and
$V_B$ be the set of all values in the supports of the distributions of $B$. Formally:
\begin{equation*}\label{eq:VA-VB}
V_A\defeq  \set{U[i][A]\mid i\in\tids(U)}\quad\quad
V_B\defeq  \bigcup\set{\support(U[i][B])\mid i\in\tids(U)}
\end{equation*}
A consistent sample $r$ selects a value $b_a\in V_B$ for each $a\in V_A$, and then
$\prob_U(r)=\prod_{a\in V_A} p(a,b_a)$
where $p(a,b)$ is given by:
\[p(a,b)\defeq \prod_{i :  U[i][A]=a}\prob_{U[i][B]}(b)\]
Therefore, to find a most probable database, we consider each $a\in V_A$ independently, and find a $b\in V_B$ that maximizes $p(a,b)$. This $b$ will be used for the tuples with the value $a$ in $A$. 
In addition,
we have the following formula that gives us immediately  a polynomial-time algorithm (via a direct computation) for the probability of consistency:
\[\prob_U(\set{A\ra B}) = \prod_{a\in V_A} \sum_{b\in V_B} p(a,b)\]
Where $\sum_{b\in V_B} p(a,b)$ is the probability that the tuples with the value $a$ for $A$ agree on their $B$ attribute. In summary, we have established the following.
\begin{proposition}\label{prop:A-ra-tB-poly}
All three problems in Definition~\ref{def:problems} are solvable in polynomial time for $A\ra\uca B$.
\end{proposition}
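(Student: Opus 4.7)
The plan is to exploit the fact that, when the left-hand side $A$ is certain, the tuples of the CIR are naturally partitioned by their $A$-value, and the constraint $A \ra \uca B$ decomposes into independent subproblems, one per partition. Concretely, for each value $a \in V_A$ let $G_a \defeq \{i \in \tids(U) \mid U[i][A] = a\}$. A sample $r$ satisfies $A\ra\uca B$ if and only if, for every $a \in V_A$, all tuples $i \in G_a$ share the same value $r[i][B]$. Since cells are independent in $U$ and the groups $G_a$ involve disjoint sets of uncertain cells, the events ``all tuples in $G_a$ pick a common $B$-value $b$'' for different $a$ are probabilistically independent. This is the structural observation I would state first and use throughout.

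Next, for a fixed $a$ and candidate $b \in V_B$, the probability that every tuple in $G_a$ picks $b$ is exactly $p(a,b) = \prod_{i \in G_a} \prob_{U[i][B]}(b)$, which can be computed in polynomial time. Summing the contributions over the (at most $|V_B|$) candidates that could witness agreement inside $G_a$ gives $\sum_{b \in V_B} p(a,b)$, the marginal probability that $G_a$ is ``internally consistent.'' Multiplying these marginals across $a \in V_A$ yields
\[
\prob_U(\{A\ra B\}) \;=\; \prod_{a \in V_A} \sum_{b \in V_B} p(a,b),
\]
settling the probability of consistency in polynomial time.

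For possible consistency, I would simply note that $\prob_U(\{A\ra B\}) > 0$ iff for every $a \in V_A$ there exists some $b \in V_B$ with $p(a,b) > 0$, which is a linear scan. For the most probable database, I would choose for each $a \in V_A$ independently a value $b_a \in \arg\max_{b \in V_B} p(a,b)$ (taking any tie-breaking rule) and define $r$ by $r[i][B] \defeq b_{U[i][A]}$ for every $i$. Independence across groups implies $\prob_U(r) = \prod_{a \in V_A} p(a,b_a)$ is maximized over consistent samples, and if every group admits a positive-probability choice the sample is consistent and hence an MPD; otherwise no consistent sample exists.

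No step is genuinely hard here: the only thing to be careful about is articulating why the decomposition into groups $G_a$ preserves both probabilities (independence of cells across groups) and optima (the per-group choices are unconstrained by one another because $A$ is certain). The fact that $A$ is an \emph{unmarked} attribute is essential — if $A$ were uncertain, different choices for the $A$-cells would induce different partitions and the clean factorization above would fail. I would emphasize this point explicitly, since it foreshadows the hardness of $\uca A \ra B$ in the next row of Table~\ref{table:complexity:binary}.
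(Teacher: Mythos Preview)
Your proposal is correct and follows essentially the same approach as the paper: partition tuples by their certain $A$-value, define $p(a,b)$ as the product of the $B$-cell probabilities within each group, maximize per group for MPD, and use the product-of-sums formula $\prod_{a}\sum_{b} p(a,b)$ for the probability of consistency. Your write-up is somewhat more explicit about the independence argument and the role of $A$ being certain, but there is no substantive difference in method.
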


Next, we discuss $A\lra\uca B$.
Let $U$ be a CIR. A consistent sample $r$ of $U$ entails the matching of each $A$ value $a$ to each $B$ value $b$, so that no two $a$ values occur with the same $b$, and no two $b$s occur with the same $a$. Therefore, we can solve this problem using an algorithm for minimum-cost perfect matching, as follows. Let $V_A$, $V_B$ and $p(a,b)$ be as defined earlier in the section for $ A\ra\uca B$. 
We construct a complete bipartite graph $G$ as follows.
\begin{itemize}
    \item The left-side vertex set is $V_A$ and the right-side vertex set is $V_B$.
    \item The cost of every edge $(a,b)$ is $(-\log p(a,b))$; we use this weight as as our goal is to translate a maximum product into a minimum sum.\footnote{We assume that the computational model for finding a minimum-cost perfect matching can handle the representation of logarithms, including $\log 0=-\infty$. As an alternative, we could use directly an algorithm for maximizing the product of the edges in the perfect matching~\cite{Tseng:1993:FCM}.}
\end{itemize}
Note that $|V_A|$ and $|V_B|$ are not necessarily of the same cardinally. If $|V_A|>|V_B|$, then $U$ has no consistent sample at all. If $|V_A|<|V_B|$, then we add to the left side of the graph dummy vertices $a'$ that are connected to all $V_B$ vertices using the same cost, say $1$. 
With this adjustment, we can now find a most probable database by finding a minimum-cost perfect matching in $G$. In summary, we have established the following.
\begin{proposition}\label{prop:A_lra_tildeB_poly}
For $A\lra\uca B$, a most probable database can be found in polynomial time.
\end{proposition}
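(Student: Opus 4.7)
The plan is to carry out the minimum-cost perfect matching reduction sketched above and verify that it is correct and polynomial. The key structural observation to justify first is that the matching constraint $A\lra\uca B$ makes a consistent sample of $U$ equivalent to an injection $f:V_A\to V_B$ together with the rule $r[i][B]=f(U[i][A])$: the direction $A\ra\uca B$ forces all tuples sharing an $A$-value to agree on $B$ (so $B$ is a function of $A$), and the direction $\uca B\ra A$ forces two tuples whose chosen $B$-values coincide to share the same $A$-value (so $f$ is injective). I would spell out that a relation $r\in\support(U)$ is a consistent sample if and only if such an $f$ exists with $f(a)\in\bigcap_{i:U[i][A]=a}\support(U[i][B])$.

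Next I would compute the probability of such a sample using cell independence. Since each tuple $i$ contributes $\prob_{U[i][\uca B]}(f(U[i][A]))$ independently, one obtains
\[
\prob_U(r) \;=\; \prod_{a\in V_A} p(a,f(a)),
\]
with $p(a,b)$ as already defined in the section. Thus MPD reduces to selecting an injection $f:V_A\to V_B$ that maximizes $\prod_{a\in V_A} p(a,f(a))$, equivalently that minimizes $\sum_{a\in V_A}\bigl(-\log p(a,f(a))\bigr)$ under the convention $-\log 0 = +\infty$, which automatically rules out any edge with $p(a,b)=0$ whenever a consistent sample exists at all.

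I would then address the cardinality issue explicitly. If $|V_A|>|V_B|$ no injection exists, so $U$ has no consistent sample and we can report this directly. If $|V_A|\le |V_B|$, pad the left side with $|V_B|-|V_A|$ dummy vertices, each connected to every $b\in V_B$ with a fixed cost (say $0$); any perfect matching in the resulting complete bipartite graph $G$ restricts to an injection $f:V_A\to V_B$ on the non-dummy vertices, and conversely every injection extends to a perfect matching, so the minimum-cost perfect matching in $G$ corresponds exactly to a most probable database. Running the Hungarian algorithm on $G$ then produces the optimal $f$ in polynomial time, and reading off $r[i][B]=f(U[i][A])$ returns the desired sample.

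The main obstacle, and the step I would be most careful about, is the bookkeeping around feasibility: ensuring that the cost convention for $p(a,b)=0$ edges (and for the padded dummy vertices) cannot cause the matching algorithm to return a ``solution'' of infinite cost when a genuine consistent sample does exist. Once this is handled, verifying that the bijection between injections $f$ and perfect matchings in $G$ is probability-preserving (up to the monotone transformation $x\mapsto -\log x$) is routine, and the polynomial running time follows immediately from the standard complexity of minimum-cost bipartite matching on a graph of size polynomial in $|U|$.
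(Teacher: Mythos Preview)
Your proposal is correct and follows essentially the same approach as the paper: both reduce MPD for $A\lra\uca B$ to minimum-cost perfect matching in a bipartite graph on $V_A\times V_B$ with edge weights $-\log p(a,b)$, padding the smaller side with dummy vertices when $|V_A|<|V_B|$ and declaring infeasibility when $|V_A|>|V_B|$. Your write-up is in fact more explicit than the paper's about the injection characterization of consistent samples and about the bookkeeping around $p(a,b)=0$ edges, which the paper dispatches via a footnote on the computational model.
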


It turns out that the third problem, the probability of consistency, is intractable. We show it in the next section.

\subsubsection{Hardness}
We now discuss the hardness results of Table~\ref{table:complexity:binary}. 
We begin with $A\lra \uca B$. Recall that possible consistency and the most probable database are solvable in polynomial time (Proposition~\ref{prop:A_lra_tildeB_poly}). The probability of consistency, however, is hard. 
\begin{rpropositionrep}\label{prop:hardness-A-lra-tildeB}
For $A\lra \uca B$, it is $\fpsharpp$-complete to compute the probability of consistency. 
\end{rpropositionrep}

\begin{proof}
We show a reduction from the problem of counting the perfect matchings of a bipartite graph (which is the same as calculating the permanent of a 0/1-matrix). This problem is known to be $\sharpp$-complete~\cite{DBLP:journals/tcs/Valiant79}. We are given a bipartite graph $G=(V_L,V_R,E)$ such that $|V_L|=|V_R|$ and the goal is to compute the number of perfect matchings that $G$ has. 
We construct a CIR $U$ as follows. For each vertex $v\in V_L$ we collect the set $N_v\subseteq V_R$ of neighbors of $v$. Let $N_v=\set{u_1,\dots,u_\ell}$. We add to $U$ the tuple $(v,u_1|\dots| u_\ell)$. 

Observe that every consistent sample induces a perfect matching (due to $A\lra \uca B$), and vice versa. Hence, the number of consistent samples of $U$ is the same as the number of perfect matchings of $G$. Since we used only uniform probabilities, every sample of $U$ has the same probability, namely
$1/(\prod_{v\in V_L}|N_v|)$. Therefore, the number of perfect matchings is 
$\prob_U(A\lra\uca B)\cdot\prod_{v\in V_L}|N_v|$.
\end{proof}

The proof of Proposition~\ref{prop:hardness-A-lra-tildeB} is by a reduction from counting the perfect matchings of a bipartite graph, which is known to be $\sharpp$-complete~\cite{DBLP:journals/tcs/Valiant79}. The next proposition addresses the case of $\uca A\ra B$ and the case of $\uca A\lra \uca B$.

\begin{toappendix}
\begin{lemma}\label{lemma:hardness-tildeA-ra-B}
For $\uca{A}\ra B$:
\begin{enumerate}
    \item Possible consistency is NP-complete.
    \item It is $\fpsharpp$-complete to compute the probability of consistency. 
\end{enumerate}
\end{lemma}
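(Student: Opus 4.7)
My plan is to prove the two parts in turn: first, NP-completeness of possible consistency, and then $\fpsharpp$-completeness of the probability of consistency.

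For part (1), membership in NP is immediate (guess a sample of $U$ and verify that $r\models\set{\uca{A}\ra B}$ in polynomial time). For NP-hardness, I would reduce from monotone \textsc{NAE-3-SAT} (known to be NP-complete by Schaefer's dichotomy). The key observation that drives the encoding is the asymmetry created by $\uca{A}\ra B$: tuples sharing a $B$-value may freely share $A$-values, whereas tuples with pairwise distinct $B$-values must pick pairwise distinct $A$-values. I would exploit this by giving ``variable tuples'' pairwise distinct $B$-values (so their $A$-choices behave as deterministic truth assignments) and ``clause tuples'' a common $B$-value (so they may jointly draw from a shared pool).

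Concretely, given a monotone 3-CNF $\phi=C_1\wedge\dots\wedge C_m$ over variables $x_1,\dots,x_n$, I would build a CIR $U$ with: (i) for each $x_i$, a tuple $v_i$ with a fresh $B$-value $\beta_i$ and uniform $A$-distribution over $\set{T_i,F_i}$; (ii) for each clause $C_j$ with variables $x_{j_1},x_{j_2},x_{j_3}$, a tuple $c_j^T$ with common $B$-value $\gamma^T$ and uniform distribution over $\set{T_{j_1},T_{j_2},T_{j_3}}$, and a tuple $c_j^F$ with common $B$-value $\gamma^F$ and uniform distribution over $\set{F_{j_1},F_{j_2},F_{j_3}}$. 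All of $\beta_1,\dots,\beta_n,\gamma^T,\gamma^F$ are pairwise distinct. Interpreting ``$x_i=\true$'' as ``$v_i$ picks $T_i$'', a $c_j^T$ tuple can pick some $T_{j_k}$ only if $v_{j_k}$ picks $F_{j_k}$---i.e., only if some literal of $C_j$ is false; symmetrically, $c_j^F$ can pick iff some literal of $C_j$ is true. Hence $U$ is possibly consistent iff $\phi$ is NAE-satisfiable.

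For part (2), membership in $\fpsharpp$ is already granted by the general observation of Section~\ref{sec:problems}. For hardness, I would reduce from counting perfect matchings in a bipartite graph (equivalently, the permanent of a $0/1$-matrix), following the same template as the proof of Proposition~\ref{prop:hardness-A-lra-tildeB}. Given $G=(V_L,V_R,E)$ with $|V_L|=|V_R|=n$, for each $u\in V_L$ I would add a tuple $t_u$ with a fresh (hence all-distinct) $B$-value $\beta_u$ and uniform $A$-distribution over its neighborhood $N(u)$. Because all $B$-values are distinct, $\uca{A}\ra B$ forces every consistent sample to be an injection $V_L\to V_R$ along edges, which (since the sides have equal size) is exactly a perfect matching of $G$. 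Every sample has the same probability $1/\prod_u|N(u)|$, so the number of perfect matchings equals $\prob_U(\set{\uca{A}\ra B})\cdot\prod_u|N(u)|$, and $\sharpp$-hardness of the permanent transfers.

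The main obstacle is the gadget design in part (1): what makes the encoding work is the decision to let all $c_j^T$ tuples share a single $B$-value (and likewise for all $c_j^F$), since this is the only way to let multiple clauses simultaneously ``consume'' the same $T_i$ when $x_i$ is false, without triggering inter-clause conflicts. Without such sharing, the problem would collapse to a pure injective-assignment task---solvable in polynomial time via bipartite matching---and thus be incapable of capturing NAE-3-SAT.
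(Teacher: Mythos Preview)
Your proposal is correct. Part~2 is essentially identical to the paper's argument: the paper also reduces from counting perfect matchings by inserting, for each left vertex $v$, the tuple $(u_1\mid\dots\mid u_\ell,\,v)$ where $\set{u_1,\dots,u_\ell}=N(v)$; your use of fresh symbols $\beta_u$ in place of $v$ itself is cosmetically different but the reduction and the counting computation are the same.

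Part~1 differs in the choice of source problem and in gadget economy. The paper reduces from \emph{non-mixed SAT} (each clause all-positive or all-negative) and needs no variable gadgets at all: for a positive clause $y_1\lor\dots\lor y_\ell$ it inserts $(y_1\mid\dots\mid y_\ell,\,\true)$, and for a negative clause $\neg y_1\lor\dots\lor\neg y_\ell$ it inserts $(y_1\mid\dots\mid y_\ell,\,\false)$. The FD $\uca A\ra B$ then simply forbids the same variable from being selected as a witness for both a positive and a negative clause, which is exactly NM-SAT satisfiability. Your reduction from monotone NAE-3-SAT achieves the same effect but with extra machinery: explicit variable tuples $v_i$ that fix the assignment, plus two clause tuples per clause. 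Your construction is sound (the shared $B$-values $\gamma^T,\gamma^F$ are indeed what prevents the instance from degenerating into a matching problem, as you note), and your correctness argument goes through in both directions. The paper's route is shorter because NM-SAT already has the ``two-sided'' structure built in, so a single tuple per clause suffices and no variable gadget is needed; your route is a bit longer but equally valid and arguably more self-contained for readers unfamiliar with NM-SAT.
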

\begin{proof}
We prove each part separately.

\subparagraph*{Part 1.}
 We show a reduction from \e{non-mixed satisfiability} (NM-SAT), where each clause contains either only positive literals (``positive clause") or only negative literals (``negative clause'').  This problem is known to be NP-complete~\cite{DBLP:conf/approx/Guruswami00}.

We are given a formula $c_1\land\dots\land c_m$ over
$x_1,\dots,x_n$. We construct an uncertain table as follows. For each positive $c_i=y_1\lor\dots\lor y_\ell$ we have in the table the tuple
$$(y_1|\dots|y_\ell,\true)\,,$$ 
that is, 
a tuple with a distinct identifier $i$ such that
$U[i][A]$ is a uniform distribution over 
$\set{y_1,\dots,y_\ell}$ and $U[i][B]$ is the value 
$\true$. 
Similarly, for each negative clause $c_i=\neg y_1\lor\dots\lor \neg y_\ell$ we have in the table the tuple
$$(y_1|\dots|y_\ell,\false)\,.$$
Hence, for each positive clause we need to select one satisfying variable, for each negative clause we need to select one satisfying variable, and we cannot select the same variable to satisfy both a positive and a negative clause. The correctness of the reduction is fairly obvious.

\subparagraph*{Part 2.}
To prove Part~2, we use a reduction from counting the perfect matchings, similarly to the proof of Proposition~\ref{prop:hardness-A-lra-tildeB}, except that now we reverse the order of the attributes:
Instead of adding the tuple $(v,u_1|\dots| u_\ell)$, we add the tuple
$(u_1|\dots| u_\ell,v)$. The reader can easily verify that each consistent sample again encodes a unique perfect matching, and vice versa.
 \end{proof}

\begin{rlemmarep}\label{lemma:hardness-tildeA-lra-tildeB}
For $\uca{A}\lra \uca B$:
\begin{enumerate}
    \item Possible consistency is NP-hard.
    \item It is $\fpsharpp$-complete to compute the probability of consistency. 
\end{enumerate}
\end{rlemmarep}
\begin{proof}
We prove each part separately.
\subparagraph*{Part 1.}
We need to show the NP-hardness of possible consistency.  We show a reduction from standard SAT, where we are given a formula $\varphi = c_1\land\dots\land c_m$ over $x_1,\dots,x_n$, and we construct a CIR $U$ over $\set{A,B}$ as follows. For each clause $c=d_1\lor\dots\lor d_\ell$ we add to $U$ the tuple $$(c,\angs{c,d_1}|\dots|\angs{c,d_\ell})\,.$$
Note that the values of $U$ are clauses $c$ and pairs $\angs{c,d}$ where $d$ is a literal. 
In addition to these tuples, we collect every two pairs $\angs{c,d}$
and $\angs{c',d'}$ such that $d$ and $d'$ are in conflict, that is,  if $d=x$ then $d'=\neg x$ and if $d=\neg x$ then $d'=x$. For each such pair, we add to $U$ the tuple
$$(\angs{c,d}|\angs{c',d'},\angs{c,d}|\angs{c',d'})\,.$$
This completes the reduction. Next, we prove the correctness of the reduction, that is, $\varphi$ is satisfiable if and only if $U$ is possibly consistent. 

For the ``only if'' direction, suppose that $\tau$ is a satisfying truth assignment for $\varphi$. We construct a consistent sample $r$ as follows. For every tuples of the form $(c,\angs{c,d_1}|\dots|\angs{c,d_\ell})$, we choose for $B$ a value $\angs{c,d_i}$ such that $\tau(d_i)=\true$. In the case of tuples of the form $(\angs{c,d}|\angs{c',d'},\angs{c,d}|\angs{c',d'})$, we choose the pair $\angs{c',d'}$ such that $\tau(d') = \false$ for both attributes $A$ and $B$. 
We need to show that $r$ satisfies $\uca A\lra \uca B$. It is easy to see why the left attribute determines the right attribute, and so, $\uca{A}\ra \uca B$ holds. Regarding $\uca{B}\ra \uca A$, we need to see verify that we do not have any conflicting tuples $(c,\angs{c,d})$ and $(\angs{c',d'},\angs{c',d'})$ where $c=c'$ and $d=d'$. This is due to the fact that $\tau(d)=\true$ and $\tau(d')=\false$. 

For the ``if'' direction, suppose that $r$ is a consistent sample. 
We define a satisfying truth assignment $\tau$ as follows. Suppose that $r$ contains
$(c, \angs{c,d})$. Then $r$ necessarily contains $(\angs{c',d'}, \angs{c',d'})$ for every $c'$ that contains the negation $d'$ of $d$. Therefore, $r$ does not contain any $(c', \angs{c',d'})$ where $d'$ contradicts $d$. So, we choose $\tau$ such that $\tau(d)=\true$. If needed, we complete $\tau$ to the remaining variables arbitrarily. From the construction of $\tau$ it holds that every clause $c$ is satisfied. This completes the proof of Part~1.

\subparagraph*{Part 2.}
Note that this part is immediate from Proposition~\ref{prop:hardness-A-lra-tildeB}, since every instance of 
$A\lra \uca B$ can be viewed as an instance of $\uca A\lra \uca B$ where all $A$ values are known.
\end{proof}
\end{toappendix}

\begin{rpropositionrep}\label{prop:hardness-tildeA-ra-B-and-tildaA-lra-tildeB}
For each of $\uca{A}\ra B$ and $\uca{A}\lra \uca B$:
\begin{enumerate}
    \item Possible consistency is NP-complete.
    \item The probability of consistency is $\fpsharpp$-complete.
\end{enumerate}
\end{rpropositionrep}
\begin{proof}
We proved each constraint separately in Lemma~\ref{lemma:hardness-tildeA-ra-B}
and Lemma~\ref{lemma:hardness-tildeA-lra-tildeB}, respectively. 
\end{proof}

 For possible consistency, we show reductions from variations of SAT, where we translate satisfying assignments of a formula to consistent samples of a CIR. For the probability of consistency we use a similar idea to Proposition~\ref{prop:hardness-A-lra-tildeB}. The proofs of Proposition~\ref{prop:hardness-A-lra-tildeB} and Proposition~\ref{prop:hardness-tildeA-ra-B-and-tildaA-lra-tildeB} can be found in the Appendix. 
 
 We have now completed all results of Table~\ref{table:complexity:binary}. We will use these results for the extension to singleton, matching, and unary constraints. 

\subsection{Singleton and Matching Constraints}
We generalize the results for the binary case to the more general case where the FD set is either a singleton or a matching constraint. 

\begin{rtheoremrep}\label{thm:singleton-matching-dichotomy}
Let $X$ and $Y$ be sets of attributes such that $X\not\subseteq Y$ and $Y\not\subseteq X$, and at least one attribute in $X\cup Y$ is uncertain. 
\begin{enumerate}
    \item In the case of $X\ra Y$, if $X$ consists of only certain attributes, then all three problems are solvable in polynomial time. Otherwise, possible consistency is NP-complete and the probability of consistency is $\fpsharpp$-complete.
    \item In the case of $X\lra Y$, if either $X$ or $Y$ consists of only certain attributes, then a most probable database can be found in polynomial time; otherwise, possible consistency is NP-hard. In any case,
    the probability of consistency is $\fpsharpp$-complete.
\end{enumerate}
\end{rtheoremrep}
\begin{proofsketch}
For the first part, the hardness side is due to a straightforward reduction from $\uca A\ra B$, where hardness is stated in Proposition~\ref{prop:hardness-tildeA-ra-B-and-tildaA-lra-tildeB}, and for the tractability side, we show a reduction to the case of $A\ra\tB$, which is tractable due to Proposition~\ref{prop:A-ra-tB-poly}. For the second part, the tractability side is via a reduction to the case of $A\lra\tB$, which is tractable due to Proposition~\ref{prop:A_lra_tildeB_poly}. The hardness of possible consistency relies on the cases of $\tA\ra B$ and $\tA\lra\tB$ from Proposition~\ref{prop:hardness-tildeA-ra-B-and-tildaA-lra-tildeB}, and the hardness of probability of consistency relies on the case $A\lra\tB$ from Proposition~\ref{prop:hardness-A-lra-tildeB}. The full proof can be found in the Appendix.
\end{proofsketch}

\begin{proof}
We prove each part separately. 
\subparagraph*{Part~1.} The hardness side is due to straightforward reduction from $\uca A\ra B$ where hardness is stated in Lemma~\ref{lemma:hardness-tildeA-ra-B}. For the tractability side, we show a reduction to the case of $A\ra\tB$, which is tractable due to Proposition~\ref{prop:A-ra-tB-poly}. Without loss of generality, suppose that all attributes of $Y$ are uncertain (and we assume that all attributes of $X$ are certain). Given an input $U$ for $X\ra Y$, we construct an instance $U_0$ for  $A\ra\tB$ with $\tids(U_0)=\tids(U)$ by converting each tuple of $U$ into a tuple of $U_0$. The construction is simple: for each identifier $i\in\tids(U)$, the tuple
$U_0[i]$ is obtained as follows:
\begin{itemize}
    \item $U_0[i](A)=\pi_X U[i]$ (i.e., the  concatenation of the values in the attributes of $X$);
    \item $U_0[i](\uca{B})$ is the distribution over all possible tuples that can be generated from
    $\pi_Y U[i]$, each with its probability (i.e., the product of the values of the different attributes of $Y$).
\end{itemize}
The correctness of the reduction is straightforward. 

\subparagraph*{Part~2.} 
The tractability side is via a reduction to the case of $A\lra\tB$, which is tractable due to Proposition~\ref{prop:A_lra_tildeB_poly}. The reduction is the same as the one to $A\ra\tB$ that we showed in Part~1.

For the hardness side, let us begin with possible consistency. We consider two cases. If $X$ and $Y$ share an uncertain attribute $\tA$, then there are easy reductions from the case of $\tA\lra\tA B$ (by fixing all other attributes to a single constant), and $\tA\lra\tA B$ is equivalent to $\tA\ra B$ where hardness is due to Lemma~\ref{lemma:hardness-tildeA-ra-B}. If $X$ and $Y$ do not share an uncertain attribute, then $X$ and $Y$ contain distinct uncertain attributes $\tA$ and $\tB$, and then we have easy reductions (again fixing all other attributes different from $\tA$ and $\tB$ to a single constant) from
the case of $\tA\lra\tB$, where hardness is due to Lemma~\ref{lemma:hardness-tildeA-lra-tildeB}.

The above arguments also imply $\fpsharpp$-completeness of the probability of consistency, except for the case where all attributes of $X$ are certain. In the latter case, we apply our easy reduction from $A\lra\tB$ and use Proposition~\ref{prop:hardness-A-lra-tildeB}.
\end{proof}

\begin{example}
Consider again the CIR $U_1$ of Figure~\ref{fig:specialist-cir}, and the following two constraints: $F_1\defeq\set{\uca{\att{specialist}}\;\att{time}\ra\att{room}}$ and $F_2\defeq F_1\cup\set{\att{room}\;\att{time}\ra\uca{\att{specialist}}}$.
For $F_1$, all three problems are hard, since the left hand side of the FD contains the uncertain attribute $\uca{\att{specialist}}$. For $F_2$, a most probable database can be found in polynomial time, since $F_2$ is equivalent to
$\att{room}\;\att{time}\lra \uca{\att{specialist}}\;\att{time}$, where one side (the left side) consists of only certain attributes.
However, the probability of consistency remains $\fpsharpp$-hard.
\qedexample
\end{example}

Note that in Theorem~\ref{thm:singleton-matching-dichotomy}, the assumption that $X\not\subseteq Y$ and $Y\not\subseteq X$  does not lose generality, for the following reason. If $X\subseteq Y$, then the FD $X\ra Y$ is equivalent to
$X\ra Y\setminus X$, the FD $Y\ra X$ is trivial, and the matching constraint
$X\lra Y$ is equivalent to the singleton $\set{X\ra Y}$ (which is covered in Part~1).


From Theorem~\ref{thm:singleton-matching-dichotomy} we can conclude that when all attributes are uncertain, possible consistency is hard, unless the FDs are all trivial (and then all three problems are clearly solvable in polynomial time); this is under the reasonable (and necessary) assumption that $F$ has no \e{consensus FDs}, that is, the left hand side of every FD is nonempty~\cite{DBLP:journals/tods/LivshitsKR20}. We later discuss this assumption.
This emphasizes the importance of having a data model that distinguishes between certain and uncertain attributes.

\begin{rtheoremrep}\label{thm:all-uncertain-all-hard}
Let $F$ be a nontrivial set of FDs over a relation schema $R$, none being a consensus FD. Then possible consistency is NP-complete.
\end{rtheoremrep}
\begin{proof}
Let $X\ra Z$ be a nontrivial FD in $F$ such that $X$ is minimal (with respect to containment) among the left hand sides of FDs in $F$. Let $Y=X^+_F\setminus X$.
Note that $X$ is nonempty due to the assumption of the theorem. Also note that $Y$ is nonempty since the closure of $X$ contains every attribute in $Z$. We consider two cases:
\begin{enumerate}
    \item $F\models Y\ra \tA$ for some $\tA\in X$.
    \item $Y_F^+=Y$.
\end{enumerate}
Note that these are the only possible two cases, since $X\cup Y$ is closed under $F$. 

We first select an attribute $\tA\in X$ as follows. 
In the first case, we use the $\tA$ of the case; in the second case, we choose an arbitrary $\tA\in X$ (which exists due to the assumption that $X$ is nonempty). In the first case we apply a reduction from $\tA\lra\tB$, and in the second we apply a reduction from $\tA\ra\tB$. 
Note that the two problems are hard for both constraints, according to Theorem~\ref{thm:singleton-matching-dichotomy}. The reduction is detailed next.

Let $U$ be an input CIR over $\set{\tA,\tB}$. We construct a CIR $U'$ over $R$
with $\tids(U')=\tids(U)$ by transforming every tuple $U[i]$ into a tuple $U'[i]$, as follows. Let $c\in\vals$ be an arbitrary constant value. Let $\delta$ be a uniform distribution over all pairs $\angs{a,b}$ where $a\in\vals$ and $b\in\vals$ are values in the (ranges of) distributions in $U[\tA]$ and $U[\tB]$, respectively. For every attribute $\uca{C}$ of $R$,
\[
U'[\uca{C}] \eqdef
\begin{cases}
U[i][\tA]  & \mbox{if $\uca{C}=\tA$;} \\
U[i][\tB]  & \mbox{if $\uca{C}\in Y$;} \\
c & \mbox{if $\uca{C}\in X\setminus\set{\tA}$;} \\
\delta & \mbox{otherwise.}
\end{cases}
\]
The correctness of the reduction is due to the following.
First, if $r'$ is a consistent sample of $U'$, then $\pi_{\set{\uca{A},\uca{B}}}r'$ is a consistent sample of $U$ due to the construction of $U'$. 

Second, every consistent sample $r$ of $U$ can be extended into a consistent sample $r'$ of $U'$ by choosing the value $\angs{r[i][\tA],r[i][\tB]}$ for each distribution $\delta$ in a cell of the tuple $i$. 
To see that $r'\models F$, consider a nontrivial FD $X'\ra Y'$ and let $i$ and $j$ be two tuple identifiers in $\tids(r')$. Suppose that $r'[i]$ and $r'[j]$ agree on the tuples of $X'$. Note that $X'$ cannot be a strict subset of $X$ due to the minimality of $X$. If $X'$ includes at least one attribute outside of $Y$, then 
$r'[i][X]$ and $r'[j][X]$ include an occurrence of $r[i][\tA]$ and, hence, they must be equal (due to the construction of $r'$ and the consistency of $r$). 
If $X'\subseteq Y$ and we are in the first case, then $r'[i][X]$ and $r'[j][X]$ include an occurrence of $r[i][\tB]$ and, hence, they must be equal. If $X'\subseteq Y$ and we are in the second case, then $Y'\subseteq Y$ and, hence, $r'[i][Y']$ and $r'[j][Y']$ are again the same tuples (due to the construction of $r'$). 
\end{proof}
The proof (in the Appendix) selects between a reduction from MPD with the singleton $\set{\tA\ra\tB}$ and a reduction from MPD with the matching constraint $\tA\lra\tB$, depending on the structure of the $F$.

We note that the assumption that $F$ has no consensus FDs is necessary. For example, for $F=\set{\emptyset\ra \tA}$, which is nontrivial, we can find a most probable database by considering every possible value $a$ for $\tA$, computing the probability of selecting $a$ in all distributions, and finally using the value with the maximal probability. 

From Theorem~\ref{thm:all-uncertain-all-hard} we immediately conclude the hardness of the three problems on \e{every} nontrivial set of FDs in the block-independent-disjoint (BID) model of probabilistic databases~\cite{ReS07}, due to the translation mentioned in the Introduction. 
\section{General Sets of Unary Functional Dependencies}\label{sec:unary}
In Section~\ref{sec:binary}, we studied the complexity of the three problems in the case of a binary schema, and we gave a full classification of the different possible sets of FDs. In this section, we extend these results to a general classification (dichotomy) for every set of \e{unary} FDs, that is, FDs with a single attribute on the left side. Our result uses a decomposition technique that we devise next.

\subsection{Reduction by Decomposition}
In this section, we devise a decomposition technique that allows us to reduce our computational problems from one set of FDs into multiple smaller subsets of the set. This technique is stated in the next theorem. After the theorem, we show several consequences that illustrate the use of the technique. Later, we will use these consequences to establish a full classification of complexity for the sets of unary FDs.

\begin{rtheoremrep}\label{thm:reduce-decompose}
Let $F$ be a set of FDs over a relation schema $R$. Suppose that $F=F_1\cup F_2$ and that all attributes in $\atts(F_1)\cap\atts(F_2)$ are certain (unmarked). 
Each of the three problems (in Definition~\ref{def:problems}) can be solved in polynomial time if its version with $F_j$ and $\atts(F_j)$ is solvable in polynomial time for both $j=1$ and $j=2$.
\end{rtheoremrep}
\begin{proofsketch}
Let $U_j=\pi_{\atts(F_j)}U$ for $j=1,2$. We show the following:
\begin{enumerate}
    \item $U$ is possibly consistent w.r.t.~$F$ if and only if $U_1$ and $U_2$ are possibly consistent w.r.t.~$F_1$ and $F_2$, respectively. 
    \item MPDs of $U_1$ and $U_2$ can be easily combined to produce an MPD of $U$.
    \item $\prob_U(F) = \prob_{U_1}(F_1) \cdot \prob_{U_2}(F_2)$.
\end{enumerate}
The full details are provided in the Appendix.
\end{proofsketch}

\begin{proof}
We start with the problem of possible consistency. Let $U_j=\pi_{\atts(F_j)}U$ for $j=1,2$, and let $U'=\pi_{\atts(U)\setminus\atts(F)}U$. We show that $U$ is possibly consistent w.r.t.~$F$ if and only if $U_j$ is possibly consistent w.r.t.~$F_j$ for $j=1,2$. If $r$ is a  consistent sample of $U$ then $\pi_{\atts(F_1)}r$ is a  consistent sample of $U_1$ w.r.t.~$F_1$ and $\pi_{\atts(F_2)}r$ is a consistent sample of $U_2$ w.r.t.~$F_2$.

Next, assume that $r_1$ is a consistent sample of $U_1$  w.r.t.~$F_1$ and $r_2$ is a consistent sample of $U_2$  w.r.t.~$F_2$. Let $r'$ be a sample in the support of $U'$. We show that $r_1+r_2+r'$ is a consistent sample of $U$ w.r.t.~$F$, where $r_1+r_2+r'$ is the relation $r$ such that $\tids(r)=\tids(r_1)$, that $\atts(r)=\atts(r_1)\cup\atts(r_2)\cup\atts(r')$, and that 
$r[i]=r_1[i] \cup r_2[i] \cup r_3[i]$ (i.e., the natural join of the tuples
$r_1[i], r_2[i]$ and $r'[i]$) for all $i\in\tids(r)$. Observe that
$$\tids(r_1)=\tids(r_2)=\tids(r')=\tids(U_1)=\tids(U_2)=\tids(U')=\tids(U) \,,$$
and that for every tuple identifier $i$ it is the case that the three relations ($r_1, r_2$ and $r'$) agree on the common attributes of tuple $i$, since these are certain attributes. Therefore, $r_1, r_2$ and $r'$ can be naturally combined to produce a relation. This relation is consistent since every FD is covered by one of the $F_j$s. Moreover, the addition of $r'$ (that has a disjoint set of attributes) does not change the consistency.

For the problem of most probable database, if $r_1$ and $r_2$ are most probable databases of $U_1$ and $U_2$, respectively, then by similar arguments, $r_1 + r_2 + r'$ is a most probable database of $U$ for every sample $r'$ in the support of $U'$. 

Finally, for the probability of consistency, we show that $\prob_U(F) = \prob_{U_1}(F_1) \cdot \prob_{U_2}(F_2)$. Let $r$ denote the random element that corresponds to a sample of $U$,
let $r_1$ be the random element $\pi_{\atts(F_1)}r$, and 
$r_2$ be the random element $\pi_{\atts(F_2)}r$. Then $r\models F$ if and only if $r_1 \models F_1$ and $r_2 \models F_2$. Moreover, $r_1$ and $r_2$ are probabilistically independent, since they involve disjoint sets of distributions, and in particular, their consistencies are probabilistically independent. This completes the proof.
\end{proof}

\eat{ 
\begin{lemma}\label{lemma:decompose-certain}
Let $F=F_1\cup F_2$ be a set of FDs and $U$ a CID. Suppose that all attributes in $\atts(F_1)\cap\atts(F_2)$ are certain (unmarked). Let $U_j=\pi_{\atts(F_j)}U$ for $j=1,2$, and let $U'=\pi_{\atts(U)\setminus\atts(F)}U$.
\begin{enumerate}
     \item If $r$ is a consistent sample of $U$ w.r.t.~$F$, then $\pi_{\atts(F_1)}r$ is a consistent sample 
    of $U_1$ w.r.t.~$F_1$ and $\pi_{\atts(F_2)}r$ is a consistent sample 
    of $U_2$ w.r.t.~$F_2$.
      \item If $r_1$ is a consistent sample of $U_1$  w.r.t.~$F_1$, $r_2$ is a consistent sample of $U_2$  w.r.t.~$F_2$, and $r'$ is a sample in the support of $U'$, 
      then $r_1+r_2+r'$ is a consistent sample of $U$ w.r.t.~$F$.
    \item Let $E$ be the event that (a random sample of) $U$ satisfies $F$, and $E_j$
    the event that $\pi_{\atts(F_j)}U$ satisfies $F_j$ (for $j=1,2$). Then $E=E_1\cap E_2$ and, moreover, $E_1$ and $E_2$ are probabilistically independent.  
\end{enumerate}
\end{lemma}
\begin{proof}
Part~1 follows directly from the definitions. To prove Part~2, let $r_1$, $r_2$ and $r'$ be as stated. Observe that 
    $\tids(r_1)=\tids(r_2)=\tids(U_1)=\tids(U_2)=\tids(U)$, and that for every tuple identifier $i$ it is the case that all four relations agree on the common attributes of tuple $i$, since these are certain attributes. Therefore, $r_1$ and $r_2$ can be naturally combined to produce a relation. This relation is consistent since every FD is covered by one of the $F_j$s. Moreover, the addition of $r'$ (that has a disjoint set of attributes) does not change the consistency.

It is left to prove Part~3. The fact that $E=E_1\cap E_2$ follows from Parts~1 and~2. To prove the probabilistic independence, let $r$ denote the random element that corresponds to a sample of $U$,
let $r_1$ be the random element $\pi_{\atts(F_1)}r$, and 
$r_2$ be the random element $\pi_{\atts(F_2)}r$.
Then $r_1$ and $r_2$ are probabilistically independent, since they involve disjoint sets of distributions. In particular, their consistencies, namely $E_1$ and $E_2$, are probabilistically independent.
\end{proof}
The next theorem is a consequence of Lemma~\ref{lemma:decompose-certain}, and it shows that
we can solve our problems for $F=F_1\cup F_2$ by solving the problems for each of $F_1$ and $F_2$ independently, under the condition that $F_1$ and $F_2$ share only certain attributes. 

\begin{theorem}\label{thm:reduce-decompose}
Let $F$ and $R$ be such that $\atts(F)\subseteq\atts(R)$. Suppose that $F=F_1\cup F_2$ and that all attributes in $\atts(F_1)\cap\atts(F_2)$ are certain (unmarked). 
Each of the three problems (in Section~\ref{sec:problems}) can be solved in polynomial time if its version with $F_j$ and $\atts(F_j)$ is solvable in polynomial time for both $j=1$ and $j=2$.
\end{theorem}
\begin{proof}
Let $X=\atts(R)\setminus\atts(F)$. Then $F$ is equivalent to $F\cup\set{X\ra X}$. By applying Lemma~\ref{lemma:decompose-certain} twice with
$F\cup\set{X\ra X}$, we conclude that each of the three problems for $R$ and $F$ can be solved by solving it for $\atts(F_1)$ and $F_1$, then 
for $\atts(F_2)$ and $F_2$, and then for 
$X$ and $\set{X\ra X}$ (which is the same as $X$ and $\emptyset$). The third one is straightforward since we only need to select the most likely value for each uncertain cell.
\end{proof}
}

An immediate conclusion from Theorem~\ref{thm:reduce-decompose} is that we can eliminate the FDs that involve only certain attributes if we know how to deal with the remaining FDs.
\begin{corollary}
Let $F$ be a set of FDs over a relation schema $R$. Let $X\ra Y$ be an FD in $F$, and suppose that all attributes in $X$ and $Y$ are certain. Then each of the three problems (in Definition~\ref{def:problems}) is polynomial-time reducible to its version with $R$ and $F\setminus\set{X\ra Y}$.
\end{corollary}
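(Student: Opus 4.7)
The plan is to invoke Theorem~\ref{thm:reduce-decompose} with the split $F=F_1\cup F_2$, where $F_1\defeq\set{X\ra Y}$ and $F_2\defeq F\setminus\set{X\ra Y}$. The hypothesis of that theorem is immediate: $\atts(F_1)\cap\atts(F_2)\subseteq X\cup Y$, and every attribute of $X\cup Y$ is certain by assumption. So it suffices to show that each of the three problems on the sub-instances $(\atts(F_1),F_1)$ and $(\atts(F_2),F_2)$ can be solved in polynomial time given (at most) an oracle for the corresponding problem on $(R,F\setminus\set{X\ra Y})$.

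For the sub-instance $(\atts(F_1),F_1)$ the oracle is not needed at all: because $\atts(F_1)=X\cup Y$ consists of only certain attributes, every input CIR on this schema is a fully deterministic relation, so possible consistency reduces to a direct check of the single FD $X\ra Y$, the MPD is that relation (when consistent) and otherwise does not exist, and the probability of consistency equals $1$ or $0$. Each of these is computable in polynomial time without any oracle call.

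For the sub-instance $(\atts(F_2),F_2)$, I would reduce to $(R,F\setminus\set{X\ra Y})$ by padding the schema. Given a CIR $U$ over $\atts(F_2)$, construct $U'$ over $R$ by extending every tuple with an arbitrary fixed value on each attribute of $R\setminus\atts(F_2)$, using a point distribution for the uncertain ones so that the contributed probability factor is $1$. Since no FD in $F_2$ mentions the padded attributes, each sample of $U'$ satisfies $F_2$ iff its projection onto $\atts(F_2)$ does, and the probabilities of corresponding samples coincide; hence $U$ is possibly consistent iff $U'$ is, $\prob_{U'}(F_2)=\prob_U(F_2)$, and projecting any MPD of $U'$ onto $\atts(F_2)$ yields an MPD of $U$. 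Composing this reduction with Theorem~\ref{thm:reduce-decompose} gives the desired reduction from $(R,F)$ to $(R,F\setminus\set{X\ra Y})$.

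I do not expect a substantive obstacle here; the only subtlety is to use point distributions (rather than arbitrary distributions) on padded uncertain attributes so that every sample's probability is preserved exactly and the equality $\prob_{U'}(F_2)=\prob_U(F_2)$ goes through.
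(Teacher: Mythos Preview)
Your proof is correct and follows exactly the approach the paper intends: apply Theorem~\ref{thm:reduce-decompose} with $F_1=\{X\ra Y\}$ and $F_2=F\setminus\{X\ra Y\}$, observe that the $F_1$-subproblem is trivial (deterministic schema), and reduce the $F_2$-subproblem back to the full schema $R$ by padding with point distributions. The paper states the corollary as ``immediate'' from Theorem~\ref{thm:reduce-decompose} without further detail, and your write-up simply fills in those routine details.
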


\begin{remark} Eliminating the FDs over the certain attributes is not always beneficial, since these FDs might be needed for applying a polynomial-time algorithm. As an example, consider the following set of FDs: $\set{\uca A\ra B \,,\, B\ra C\,,\, C\ra\uca{A}}$.
As we will show later, for this set of FDs we can find a most probable database in polynomial time. However, we will also show that possible consistency is NP-hard for the subset
$\set{\uca A\ra B \,,\, C\ra\uca{A}}$. Hence, $B\ra C$ is needed for the polynomial-time algorithm.\qed
\end{remark}

The following consequence of Theorem~\ref{thm:reduce-decompose} identifies a general tractable case: the problems are solvable in polynomial time if uncertain attributes do not appear in the left side of the FDs (but they can appear in the right side or outside of the FDs). 
\begin{rtheoremrep}\label{thm:lhs-certain-poly}
Let $F$ be a set of FDs. If the left side of every FD includes only certain attributes, then each of the three problems (in Definition~\ref{def:problems}) is solvable in polynomial time.
\end{rtheoremrep}
\begin{proofsketch}
Assume, without loss of generality, that each FD in $F$ contains a single attribute on the right side. For every $A\in\atts(F)$, let $F_A$ be the subset of $F$ that contains all FDs with $A$ being the right side (i.e., all FDs of the form $X\ra A$). Then 
$F=\cup_{A\in\atts(F)}F_A$. Note that sets $F_A$ and $F_B$, where $A\neq B$, share only certain attributes. This is true since our assumption implies that an uncertain attribute $\uca A$ can appear only in $F_{\uca A}$. Hence, we can apply Theorem~\ref{thm:reduce-decompose} repeatedly and conclude that we need a polynomial-time solution for each $F_{\uca A}$. In the Appendix, we show that we can obtain that using a similar concept to the algorithm for $A\ra\uca B$ from Section~\ref{sec:algorithms-binary}.
\end{proofsketch}

\begin{proof}
Following the proof sketch, we show that we can solve each of the three problems for each $F_{\uca A}$ desperately. Consider the undirected graph $G$ that has as vertices the tuple identifiers of $U$, and an edge between two identifiers $i$ and $i'$ whenever the two agree on all attributes of the left hand side of some dependency in $F_{\uca A}$. Hence, an edge means that the two tuples should have the same value in every consistent sample. Therefore, we can compute the most probable database, and calculate the probability of consistency, in a similar manner to the algorithms for $A\ra\uca B$ from Section~\ref{sec:algorithms-binary}, except that now we consider entire connected components rather than simple groups by the attribute $A$.
\end{proof}

\subsection{Classification}
We now state the precise classification of the complexity of the problems in the case of unary FDs. The statement uses the following terminology. Let $F$ be a set of unary FDs. Recall that two attributes $A$ and $B$ and are \e{equivalent} if they have the same closure, that is, $\closure{\set{A}}{F}=\closure{\set{B}}{F}$. An attribute $A$ is called a \e{sink} if 
$\closure{\set{A}}{F}=\set{A}$, that is, $A$ does not appear in the left hand side of any nontrivial FD.
In this section, we will prove the following classification (trichotomy) result, which is also illustrated in Figure~\ref{ref:fig:unary-classification}.
\begin{theorem}\label{thm:unary:trichotomy}
Let $F$ be a set of unary FDs over a relation schema $R$. Then following hold. 
\begin{enumerate}
    \item If every uncertain attribute is either a sink or equivalent to a certain attribute, then a most probable database can be found in polynomial time; otherwise, possible consistency is NP-complete.
    \item If every uncertain attribute is a sink, then the probability of consistency can be calculated in polynomial time; otherwise, it is $\fpsharpp$-complete.
\end{enumerate}
\end{theorem}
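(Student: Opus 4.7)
The plan is to prove the two parts in turn, using the binary classifications of Section~\ref{sec:single-matching} together with the decomposition technique of Theorem~\ref{thm:reduce-decompose} and the LHS-certain algorithm of Theorem~\ref{thm:lhs-certain-poly}. Part~2 is simpler. On the tractable side, if every uncertain attribute is a sink, then no uncertain attribute occurs on the LHS of any nontrivial FD; hence every LHS consists only of certain attributes, and Theorem~\ref{thm:lhs-certain-poly} delivers a polynomial-time algorithm. On the hard side, pick an uncertain $\uca A$ that is not a sink. If $\uca A$ is equivalent to a certain $C$, then $\uca A\lra C$ lies in $F^+$, and I would reduce from the $\fpsharpp$-hard problem for $A\lra\uca B$ (Proposition~\ref{prop:hardness-A-lra-tildeB}) by copying the two binary columns into $C$ and $\uca A$ and padding the remaining attributes with constants or deterministic distributions chosen to trivialize the other FDs. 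If $\uca A$ is not equivalent to any certain attribute, then Part~1 already gives NP-hardness of possible consistency, and thus $\fpsharpp$-hardness of the probability.

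For the tractable side of Part~1, assume every uncertain attribute is either a sink or equivalent to a certain attribute. I would apply Theorem~\ref{thm:reduce-decompose} repeatedly to decompose $F$ into per-equivalence-class subproblems. Partition $\atts(F)$ into equivalence classes: by the hypothesis, every uncertain attribute lies either in a singleton sink class or in a class that contains at least one certain representative. For an equivalence class $[C]$ containing a certain $C$ together with some uncertain attributes $\uca A$, the constraints $C\lra\uca A$ force a bijective correspondence between the $C$-groups of tuples and the values chosen for each $\uca A$, so a most probable assignment inside the class can be obtained by solving, independently for each such $\uca A$, a weighted bipartite matching in the style of Proposition~\ref{prop:A_lra_tildeB_poly}. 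For a singleton sink class $\set{\uca A}$, the uncertain $\uca A$ appears only on the RHS of FDs; by the hypothesis and the fact that these LHSs are not sinks (so, if uncertain, they are equivalent to certain attributes), each uncertain LHS can be replaced by its certain equivalent, reducing the situation to Theorem~\ref{thm:lhs-certain-poly}. Theorem~\ref{thm:reduce-decompose} justifies that these per-class computations combine: the attributes shared across the pieces of the decomposition are precisely the certain representatives.

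For the hard side of Part~1, fix an uncertain $\uca A$ that is neither a sink nor equivalent to any certain attribute. Since $\uca A$ is not a sink, $\closure{\set{\uca A}}{F}\setminus\set{\uca A}$ is nonempty. If some $B$ in that set is not equivalent to $\uca A$, then $\uca A\ra B$ is in $F^+$ while $B\ra\uca A$ is not; I would reduce from the NP-hard binary problem for $\uca A\ra B$ (Proposition~\ref{prop:hardness-tildeA-ra-B-and-tildaA-lra-tildeB}), passing through a certain equivalent of $B$ when $B$ is uncertain and equivalent to such an attribute, and otherwise making $B$ (when uncertain) deterministic in $U'$, which trivially inherits the hardness. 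Otherwise every attribute of $\closure{\set{\uca A}}{F}$ is equivalent to $\uca A$ and uncertain by hypothesis, so any such $\uca B\ne\uca A$ satisfies $\uca A\lra\uca B$ in $F^+$, and I would reduce from the NP-hard problem for $\uca A\lra\uca B$ (Proposition~\ref{prop:hardness-tildeA-ra-B-and-tildaA-lra-tildeB}).

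The main obstacle in all of these reductions is to fill in the attributes of $R$ outside $\set{\uca A,B}$ so that the hard binary instance embeds faithfully and every other FD of $F$ is vacuously satisfied in the constructed sample without constraining the encoded variables. I would handle this by a case analysis in the spirit of the proof of Theorem~\ref{thm:all-uncertain-all-hard}: attributes inside $\closure{\set{\uca A}}{F}\setminus\set{\uca A}$ are set to copy the encoded $B$-column (or its certain equivalent), attributes that lie in the closure of some attribute outside the encoding but whose closure contains $\uca A$ receive distinct fresh-value assignments per tuple so that any inward FD toward $\uca A$ becomes vacuous, and the remaining attributes receive independent fresh values via a distribution over pairs that can always be extended consistently given a consistent binary sample. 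On the tractable side the analogous obstacle is purely bookkeeping: checking that the per-class matching and grouping computations compose correctly under Theorem~\ref{thm:reduce-decompose}, which reduces to verifying that the attributes shared across the pieces of the decomposition are always the certain representatives.
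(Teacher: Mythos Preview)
Your overall plan tracks the paper's proof (decomposition via Theorem~\ref{thm:reduce-decompose}, bipartite matching for non-sink uncertain attributes with a certain partner, and Theorem~\ref{thm:lhs-certain-poly} for the remainder), but two steps have genuine gaps.

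First, in Part~2 you write that when $\uca A$ is not equivalent to any certain attribute, ``Part~1 already gives NP-hardness of possible consistency, and thus $\fpsharpp$-hardness of the probability.'' That inference is not valid: NP-hardness of deciding whether $\prob_U(F)>0$ does not, by itself, yield $\fpsharpp$-hardness of computing $\prob_U(F)$. The paper establishes $\fpsharpp$-hardness directly (Lemmas~\ref{lemma:equiv-prob-sharpp} and~\ref{lemma:nonsink-prob-sharpp}) via reductions from the binary cases $A\lra\uca B$ and $\uca A\ra B$ that preserve the \emph{probability} of consistency, not merely its positivity. You could salvage your route by arguing that the reductions you build for Part~1 are probability-preserving and that their sources are already $\fpsharpp$-hard for probability, but you must actually state and verify this; it is not automatic, and in fact your ``copy the $B$-column'' construction is not probability-preserving when several uncertain attributes are equivalent to $\uca A$ (the extension of a consistent binary sample is neither unique nor of constant relative weight).

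Second, your case split for the hardness side of Part~1 differs from the paper's in a way that leaks. You split on whether \emph{some} $B\in\closure{\set{\uca A}}{F}\setminus\set{\uca A}$ is not equivalent to $\uca A$; the paper splits on whether \emph{no} such $B$ is equivalent (its Case~1) versus \emph{some} is (its Case~2, handled via Lemma~\ref{lemma:equiv-uncertain-hard}). The troublesome sub-case is when the closure contains both a non-equivalent $B$ and an equivalent (hence uncertain) $\uca B_1$. Your Case~1 copies the certain $B$-column into every attribute of the closure, including $\uca B_1$; but then the FD $\uca B_1\ra\uca A$ becomes the live constraint ``equal $B$-values force equal $\uca A$-values,'' which is absent from the binary source $\uca A\ra B$, so consistent samples no longer correspond. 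Your ``distinct fresh-value assignments per tuple'' clause does not rescue this either: making $\uca B_1$ take distinct deterministic values breaks $\uca A\ra\uca B_1$ whenever two rows share an $\uca A$-value. The paper's fix is to place this sub-case in Case~2 and give the equivalent attributes independent uniform distributions over $|\tids(U_0)|$ fresh values (Lemma~\ref{lemma:equiv-uncertain-hard}), while the non-equivalent attributes of the closure get a single constant; this both preserves possible consistency and scales the probability by a known factor.
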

The following examples illustrate the instantiation of the theorem to specific scenarios.

\begin{example}
We give several examples for the case of a ternary schema $\set{A,B,C}$. Consider the following sets of FDs:
$$ 
F_1\defeq\set{A\ra B\ra \tC} \quad\quad
F_2\defeq\set{A\ra \tB\ra C} \quad\quad
F_3\defeq\set{A\lra \tB\ra \tC}
$$
Theorem~\ref{thm:unary:trichotomy} tells us that the following.
All three problems are solvable in polynomial time in the case of $F_1$, since the uncertain attribute $\tC$ is a sink. In the case of $F_2$, we can see that $\tB$ is neither a sink nor equivalent to any certain attribute; hence, all three problems are intractable for $F_2$. In the case of $F_3$, it holds that $\tC$ is a sink and $\tB$ is not a sink but is equivalent to the certain attribute $A$. Hence, the consistency of $F_3$ is $\fpsharpp$-complete, but we can find a most probable database in polynomial time.\qedexample
\end{example}

\begin{example}
Consider again the CIR $U_2$ of Figure~\ref{fig:spokesperson-cir}. Consider the following constraints.
\begin{enumerate}
\item $\att{business}\ra\uca{\att{spokesperson}}\uca{\att{location}}$
\item $\uca{\att{spokesperson}}\ra\uca{\att{location}}$
\item $\att{business}\lra\uca{\att{spokesperson}}\ra\uca{\att{location}}$
\end{enumerate}
For the first constraint, all three problems are tractable since both $\uca{\att{spokesperson}}$ and $\uca{\att{location}}$ are sinks. For the second constraint, all three problems are intractable since 
$\uca{\att{spokesperson}}$ is neither a sink nor equivalent to any certain attribute. For the third constraint, a most probable database can be found in polynomial time since $\uca{\att{spokesperson}}$  is equivalent to the certain $\att{business}$ and $\uca{\att{location}}$ is a sink, but the probability of consistency is $\fpsharpp$-complete since $\uca{\att{spokesperson}}$ is not a sink.
\qedexample
\end{example}


In the remainder of this section, we prove each of the two parts of Theorem~\ref{thm:unary:trichotomy} separately.

\begin{figure}
  \centering
  \includegraphics[width=1.0\linewidth]{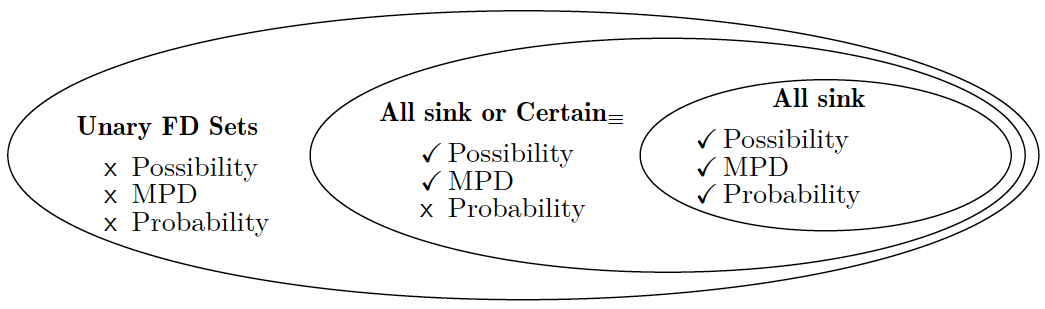}
  \caption{Classification of the complexity of consistency problems for sets of unary FDs. (See Table~\ref{table:complexity:binary} for the naming of the problems.) ``All sink'' refers to the case where every uncertain attribute is sink, and ``All sink or Certain$_\equiv$'' refers to the case where every uncertain attribute is either a sink or equivalent to a certain attribute. \label{ref:fig:unary-classification}}
  \end{figure}


\subsubsection
{Part 1 of Theorem~\ref{thm:unary:trichotomy} (Possible Consistency and MPD)}

We first prove the tractability side of Part~1 of the theorem.
\begin{rlemmarep}\label{lemma:unary-tractable}
Let $F$ be a set of unary FDs over a schema $R$. If every uncertain attribute is either a sink or equivalent to a certain attribute, then a most probable database can be found in polynomial time.
\end{rlemmarep}
\begin{proofsketch}
The idea to define a set $F_{\uca A}$ of FDs for every uncertain attribute $\uca A\in\uatts(U)$, and a set $F'$ of FDs where all left-side attributes are certain, such that $F$ is equivalent to
$ F'\cup \bigcup_{\uca A\in\uatts(U)}F_{\uca A} $.
Then, we repeatedly apply Theorem~\ref{thm:reduce-decompose} to reduce the original problem to instances that are solvable in polynomial time by Proposition~\ref{prop:A_lra_tildeB_poly} and Theorem~\ref{thm:lhs-certain-poly}.
\end{proofsketch}

\begin{proof}
Note that a sink uncertain attribute $\uca A$ cannot appear in the left side of any FD in $F$, unless it is the trivial FD $\uca A\ra \uca A$. We assume that $F$ does not contain such trivial FDs, and so, a sink attribute can appear only on the right side of a rule.
For every uncertain attribute $\uca A\in\uatts(U)$ we define a set $F_{\uca A}$ of FDs as follows.
\begin{itemize}
\item If $\uca A$ is a sink or not at all in $\atts(F)$, then $F_{\uca A}=\emptyset$.
\item If $\uca A$ is non-sink, then we find an equivalent certain attribute $B$ and set $F_{\uca A}=\set{\uca A\lra B}$.
\end{itemize}
Then $F$ is equivalent to the set 
\begin{equation}\label{eq:dichotomy-tractable-union}
F'\cup \bigcup_{\uca A\in\uatts(U)}F_{\uca A}
\end{equation} where $F'$
is obtained from $F$ by replacing every non-sink uncertain attribute $\uca A$ by its partner $B$ in $F_{\uca A}$. Now, observe the following.
\begin{itemize}
    \item For each $F_{\uca A}$ we can find a most probable database in polynomial time, according to Proposition~\ref{prop:A_lra_tildeB_poly}.
    \item In $F'$ we have that all left-side attributes are certain, and so, we can find a most probable database according to Theorem~\ref{thm:lhs-certain-poly}.
\end{itemize}
Finally, observe that every uncertain attribute appears in at most one subset in Equation~\eqref{eq:dichotomy-tractable-union}. Hence, we can repeatedly apply Theorem~\ref{thm:reduce-decompose} and complete the proof.
\end{proof}

For the hardness side of Part~1 of Theorem~\ref{thm:unary:trichotomy}, we will need the following lemma, which generalizes the case of $\uca{A}\lra\uca{B}$ from Proposition~\ref{prop:hardness-tildeA-ra-B-and-tildaA-lra-tildeB}.

\begin{rlemmarep}\label{lemma:equiv-uncertain-hard}
Let $R=\set{\uca{A_1},\dots,\uca{A_k}}$ consist of $k>1$ uncertain attributes, and suppose that $F$ is a set of FDs stating that all attributes in $R$ are equivalent. Then possible consistency is NP-complete.
\end{rlemmarep}
\begin{proof}
The case of $k=2$, namely $\uca{A_1}\lra\uca{A_2}$, has been shown in Lemma~\ref{lemma:hardness-tildeA-lra-tildeB}. For $k>2$, we will show a reduction from $\uca {A_1}\lra\uca{A_2}$. Given an input $U_0$ for $\set{\uca {A_1},\uca{A_2}}$, we construct an instance $U$ for $\set{\uca {A_1},\dots,\uca{A_k}}$ by adding to $U_0$ columns to the right. We need 
$k-2$ such columns. We set in each cell of these $k-2$ columns the same distribution: the uniform distribution over $m$ distinct values, where $m=|\tids(U_0)|$ is the number of tuples in $U_0$.

The correctness of the reduction is due to the following observations.
First, if $r$ is a  consistent sample for $U$, then $\pi_{\set{\uca{A_1},\uca{A_2}}}r$ is a consistent sample for $U_0$. Second, every consistent sample $r_0$ of $U_0$ can be extended into a feasible consistent sample of $U$ by adding to each tuple a value that is uniquely determined by the values in the row. Third, the probability of every sample $r$ of $U$ is determined only by the values in $\uca{A_1}$ and $\uca{A_2}$:
\[\prob_{U}(r) = \prob_{U_0}(\pi_{\set{\uca{A_1},\uca{A_2}}}r)\cdot m^{-m(k-2)}\]
In particular, finding a consistent sample $r$ for $U$ is the same problem as finding a consistent sample $r_0$ for $U_0$.
\end{proof}

The next lemma states the hardness side of Part~1 of Theorem~\ref{thm:unary:trichotomy}.

\begin{rlemmarep}\label{lemma:unary-intractable}
Let $F$ be a set of unary FDs over a schema $R$. If there is an uncertain attribute that is neither a sink nor equivalent to a certain attribute, then possible consistency is NP-complete.
\end{rlemmarep}
\begin{proofsketch}
Let $\tA$ be an attribute that is neither a sink nor equivalent to a certain attribute. Let $X$ be the
closure of $\tA$ and $X'$ be $X\setminus\set{\tA}$. Observe the following. First, $X'$ must be nonempty since $\tA$ is not a sink. Second, if any attribute in $X'$ implies $\tA$ then it is equivalent to $\tA$, and then it is necessarily uncertain. 
We consider two cases:
\begin{enumerate}
    \item No attribute in $X'$ implies $\tA$.
    \item Some attribute in $X'$ implies $\tA$.
\end{enumerate}
For the first case, we show a reduction from $\tA\ra B$, where possible consistency is NP-complete due to Proposition~\ref{prop:hardness-tildeA-ra-B-and-tildaA-lra-tildeB}. For the second case, let $\tB_1,\dots,\tB_\ell$ be the set of all attributes in $X'$ that imply $\tA$. As said above, each $\tB_j$ must be uncertain. Then all of 
$\tB_1,\dots,\tB_\ell,\tA$ are equivalent. We show a reduction from the problem of
Lemma~\ref{lemma:equiv-uncertain-hard} where $k=\ell+1$. The constructions of the reductions for the two cases and the proofs of correctness can be found in the Appendix.
\end{proofsketch}

\begin{proof}
We consider the two cases that are specified in the proof sketch of Lemma~\ref{lemma:unary-intractable} (in the main body of the paper).

\subparagraph*{Case 1: No attribute in $X'$ implies $\tA$.} We show a reduction from $\tA\ra B$, where possible consistency is NP-complete due to Proposition~\ref{prop:hardness-tildeA-ra-B-and-tildaA-lra-tildeB}. Let $U_0$ be an input for $\set{\tA,B}$ and $\tA\ra B$. We construct an input $U$ for $R$ and $F$ as follows. The two CIRs have the same set of identifiers, that is, $\tids(U)=\tids(U_0)$. For each $i\in\tids(U)$ we define the tuple $U[i]$ by:
\begin{itemize}
    \item $U[i][\tA]=U_0[i][\tA]$;
    \item $U[i][B]=U_0[i][B]$ for every $B\in X'$ (even if $B$ is uncertain);
    \item $U[i][C]=i$ for every $C\in R\setminus X$.
\end{itemize}
Note that we are using the tuple identifier $i$ as a value to assure that every two tuples have different values in the corresponding $C$ attributes. We will prove that $U$ is possibly consistent if and only if $U_0$ is possibly consistent. From a  consistent sample $r$ for $U$ we can get a  consistent sample $r_0$ for $U_0$ by projecting on $\tA$ and some attribute $C\in X$, and renaming $C$ as $B$. Note that $r_0$ is a  consistent sample of $U_0$ since $r$
satisfies $\tA\ra C$. 

For the other direction, suppose that $r_0$ is a  consistent sample for $U_0$. To obtain a  consistent sample for $U$, we select each $\tA$ value $U[i][\tA]$ to be one chosen by $r_0$, namely
$r_0[i][\tA]$. Denote the result by $r$. We need to show that $r$ satisfies $F$. Let $D\ra E$ an FD in $F$. If $D=\tA$ then $E\in X$ and then the FD is satisfied since $r_0$ satisfies $\tA\ra B$. If $D\in X'$ then $E\in X$, and $E\neq\tA$ since $D$ and $\tA$ are not equivalent. Hence, $E\in X'$ as well. It thus follows that the $E$ column is equal to the $D$ column and, therefore, $D\ra E$ is satisfied. Finally, if $D\notin X$ then no two tuples agree on $D$, and then $D\ra E$ is satisfied in a vacuous manner.

\subparagraph*{Case 2: Some attribute in $X'$ implies $\tA$.} Let $\tB_1,\dots,\tB_\ell$ be the set of all attributes in $X'$ that imply $\tA$. As said above, each $\tB_j$ must be uncertain. Then all of 
$\tB_1,\dots,\tB_\ell,\tA$ are equivalent. We will show a reduction from the problem of
Lemma~\ref{lemma:equiv-uncertain-hard} where $k=\ell+1$. Let $U_0$ be an input for that problem; hence, $U_0$ is a CIR over $\set{\tA_1,\dots,\tA_k}$. We construct a CIR $U$ over $R$, as follows. Again, $\tids(U)=\tids(U_0)$ and for each $i\in\tids(U)$ we define the tuple $U[i]$ according to the following rules. We use an arbitrary constant value that we denote by $c$.
\begin{itemize}
    \item $U[i][\tA]=U_0[i][\tA_k]$;
    \item $U[i][\tB_j]=U_0[i][\tA_j]$ for $j=1,\dots,\ell$;
    \item $U[i][C]=c$ for every $C\in X'\setminus\set{\tB_1,\dots,\tB_\ell}$; 
    \item $U[i][C]=i$ for every $C\in R\setminus X$.
\end{itemize}
Again, we show that $U$ is possibly consistent if and only if $U_0$ is possibly consistent. From a  consistent sample $r$ for $U$ we can get a consistent sample $r_0$ for $U_0$ by projecting on $\set{\tB_1,\dots,\tB_\ell,\tA}$, renaming each $\tB_j$ as $\tA_j$ and $\tA$ as $\tA_k$. Note that $r_0$ is a  consistent sample for $U_0$ (and in particular $r_0$ satisfies the FDs of Lemma~\ref{lemma:equiv-uncertain-hard}) since $r$ satisfies $\tB_1\lra\dots\lra\tB_\ell\lra\tA$.

For the other direction, let $r_0$ be a  consistent sample of $U_0$. For each tuple identifier $i\in\tids(U)$ we need to state the choice for the value $U[i][\tA]$ and $U[i][\tB_j]$ for $j=1,\dots,\ell$. We set the former to
 $r_0[i][\tA_k]$ and the latter to $r_0[i][\tA_j]$. Let the result be $r$. We need to show that $r$ satisfies $F$. Let $D\ra E$ be an FD in $F$. We have the following cases:
 \begin{itemize}
      \item $D\notin X$. Then no two tuples agree on $D$, and then $D\ra E$ is satisfied in a vacuous manner.
      \item $E\notin X$. Then $D\notin X$ (previous case) since $X$ is a closure.
       \item $E\in X\setminus\set{\tA,\tB_1,\dots,\tB_\ell}$. Then $D\ra E$ holds since all values of the attribute $E$ are equal (to $c$) in $r$.
       \item $D\in X\setminus\set{\tA,\tB_1,\dots,\tB_\ell}$. Then $E\in X$ and $E$ cannot be in
       $\set{\tA,\tB_1,\dots,\tB_\ell}$, or otherwise $D$ is also equivalent to $\tA$ (while we know that only $\tB_1,\dots,\tB_\ell$ are equivalent to $\tA$).
      \item $\set{D,E}\subseteq\set{\tA,\tB_1,\dots,\tB_\ell}$. Then $D\ra E$ is satisfied since $r_0$ satisfies the constraint $\tA_1\lra\dots\lra\tA_k$.
       \end{itemize}
 We conclude that $r$ satisfies $F$, and so, $r$ is a  consistent sample. This concludes the proof.
\end{proof}

\subsubsection
{Part 2 Theorem~\ref{thm:unary:trichotomy} (Probability of Consistency)}
We now move on to Part~2. The tractability side follows immediately from Theorem~\ref{thm:lhs-certain-poly}, since if all uncertain attributes are sinks, then all left-side attributes are certain (up to trivial FDs $\tA\ra\tA$ that can be ignored).
Hence, it remains to prove the hardness side of Part~2 of Theorem~\ref{thm:unary:trichotomy}. We start with the following lemma, where we use a reduction from the case of $A\lra\uca B$, where probability of consistency is $\fpsharpp$-complete by Proposition~\ref{prop:hardness-A-lra-tildeB}, to establish hardness for a more general case.

\begin{rlemmarep}\label{lemma:equiv-prob-sharpp}
Let $F$ be a set of unary FDs over a schema $R$. If at least one uncertain attribute is equivalent to a certain attribute, then the probability of consistency is 
$\fpsharpp$-complete.
\end{rlemmarep}
\begin{proof}
Let $A$ be a certain attribute, and let $\uca B$ be an uncertain attribute that is equivalent to $A$. We will use Proposition~\ref{prop:hardness-A-lra-tildeB} and show a reduction from 
$R_0=\set{A,\uca B}$ and $F_0=\set{A\lra\uca B}$. Let $U_0$ be an input for
$R_0$ and $U_0$. We will produce an input $U$ for $R$ and $F$ by adding columns with certain values, regardless of whether the attribute is certain or not, and we will show that the probability of consistency if the same in $U$ and $U_0$. The construction is straightforward: we simply copy the $A$ column. In notation, for each identifier $i\in\tids(U)$ and attribute $C\in R\setminus\set{A,B}$ we set $U[i][C]=U_0[i][A]$.

Observe that every sample $r_0$ of $U_0$ has a unique extension $r_0^+$ of $U$, and every sample of $U$ is the unique extension $r_0^+$ of some sample $r_0$ of $U$. To complete the proof, we show that both of the following hold:
\begin{enumerate}
\item $\prob_{U_0}(r_0)=\prob_{U}(r_0^+)$;
\item $r_0\models F_0$ if and only if $r_0^+\models F$.
\end{enumerate}
Part~1 is straightforward from the construction, since all values of $U$ are certain except for those in $\uca B$. So we now prove Part~2. The ``if'' direction is due to the fact that $F$ implies both
$A\ra \uca B$ and $\uca B\ra A$. For the ``only if'' direction, suppose that $r_0\models F_0$ and let $r=r_0^+$. We show that $r\models F$. Let $C\ra D$ be an FD in $F$, let $i$ and $j$ be two identifiers, and suppose that $r[i][C]=r[j][C]$. We need to show that $r[i][D]=r[j][D]$. Due to our construction, $r[i][C]=r[j][C]$ implies that either $r[i][A]=r[j][A]$ or $r[i][\uca B]=r[j][\uca B]$ (in case $C$ is $\uca B$). Since $r$ satisfies $A\ra \uca B$, it holds that
both $r[i][A]=r[j][A]$ and $r[i][\uca B]=r[j][\uca B]$ hold. Finally, observe that $D$ is either $\uca B$ or a copy of $A$ (from our construction). Therefore, $r[i][D]=r[j][D]$, as claimed.
\end{proof}

We can now complete the proof of the hardness side of Part~2.
\begin{rlemmarep}\label{lemma:nonsink-prob-sharpp}
Let $F$ be a set of unary FDs over a schema $R$. If there is at least one uncertain attribute that is not a sink, then the probability of consistency is $\fpsharpp$-complete.
\end{rlemmarep}
\begin{proofsketch}
Let $\tA$ be an uncertain attribute that is not a sink. Let $Y=(\tA)^+_F\setminus\set{\tA}$. Note that $Y$ is nonempty, since $\tA$ is not a sink. If any attribute $B$ in $Y$ functionally determines $\tA$, then we can use this attribute as a certain attribute (even if it is uncertain) and use Lemma~\ref{lemma:equiv-prob-sharpp}, since $\tA$ is equivalent to $B$. Otherwise, suppose that no attribute in $Y$ determines $\tA$. For this case, we show (in the Appendix) a reduction from $\tA\ra B$, where the probability of consistency is $\fpsharpp$-hard according to Proposition~\ref{prop:hardness-tildeA-ra-B-and-tildaA-lra-tildeB}.
\end{proofsketch}

\begin{proof}
Following the proof sketch, we assume that no attribute in $Y$ determines $\tA$. For this case, we show a reduction from $\tA\ra B$, where the probability of consistency is $\fpsharpp$-hard according to Proposition~\ref{prop:hardness-tildeA-ra-B-and-tildaA-lra-tildeB}.

Let $R_0=\set{\tA,B}$ and $F_0=\set{\tA\ra B}$. Let $U_0$ be an input for
$R_0$ and $U_0$. We will again produce an input $U$ for $R$ and $F$ by adding columns with certain values, regardless of whether the attribute is certain or not, and we will show that the probability of consistency is the same in $U$ and $U_0$. The construction is as follows: we copy the $B$ column to every attribute in $Y$, and use a unique fresh value for every other attribute. In notation, for each identifier $i\in\tids(U)$ and attribute $C\in Y$ we set $U[i][C]=U_0[i][B]$, and for every attribute $C\in R\setminus(Y\cup\set{\tA})$ we set $U[i][C]=i$.

We show the correctness of the reduction, using a similar argument as in the proof of Lemma~\ref{lemma:equiv-prob-sharpp}. Note that our construction is such that every sample $r_0$ of $U_0$ has a unique extension $r_0^+$ of $U$,
and every sample of $U$ is the unique extension $r_0^+$ of some sample $r_0$ of $U$. Then, the following hold:
\begin{enumerate}
\item $\prob_{U_0}(r_0)=\prob_{U}(r_0^+)$.
\item $r_0\models F_0$ if and only if $r_0^+\models F$;
\end{enumerate}
Again, Part~1 is straightforward from the construction, so we prove Part~2. The ``if'' direction holds since $F\models\tA\ra B$. For the ``only if'' direction, suppose that $r_0\models F_0$ and let $r=r_0^+$. We show that $r\models F$. Let $C\ra D$ be an FD in $F$, let $i$ and $j$ be two different identifiers, and suppose that $r[i][C]=r[j][C]$. We need to show that $r[i][D]=r[j][D]$.
Note that the construction implies that $C$ must be in the closure of $\tA$, since $i\neq j$.
If $C$ is $\tA$, then $D$ belongs to $Y$, and then  $r[i][D]=r[j][D]$ since $r_0$ is consistent. 
If $C$ belongs to $Y$, then $D$ also belongs to $Y$ (since $D$ cannot be $\tA$), and then  $r[i][D]=r[j][D]$ by construction. This completes the proof.
\end{proof}

\subsubsection
{Recap}

We can now complete the proof of Theorem~\ref{thm:unary:trichotomy}. For Part~1, the tractability side is given by Lemma~\ref{lemma:unary-tractable}, and the hardness is given by Lemma~\ref{lemma:unary-intractable}. As for Part~2, the tractability side follows immediately from Theorem~\ref{thm:lhs-certain-poly}, and the hardness side is stated in Lemma~\ref{lemma:nonsink-prob-sharpp}.
\section{Conclusions}\label{sec:conclusions}
We defined the concept of a CIR and studied the complexity of three problems that relate to consistency under FDs: possible consistency, finding a most probable database, and the probability of consistency. A seemingly minor feature of the definition of a CIR is the distinction between certain and uncertain attributes; yet, this distinction turns out to be crucial for detecting tractable cases. We gave  classification results for several classes of FD sets, including a single FD, a matching constraint, and arbitrary sets of unary FDs. We also showed that if all attributes are allowed to be uncertain, then the first two problems are intractable for every nontrivial set of FDs. 

This work leaves many problems for future investigation. Within the model, we have not yet completed the classification for the whole class of FD sets, where the problem remains open. Recall that a full classification is known for the most probable database for tuple-independent databases~\cite{DBLP:journals/tods/LivshitsKR20}. Moreover, as we hit hardness already for simple cases (e.g., $\tA\ra B$), it is important to identify realistic properties of the CIR that reduce the complexity of the problems and allow for efficient algorithms.

Going beyond the framework of this paper, we plan to study additional types of constraints that are relevant to data cleaning~\cite{DBLP:series/synthesis/2012Fan}, such as conditional FDs, denial constraints, and foreign-key constraints (where significant progress has been recently made in the problem of consistent query answering~\cite{DBLP:conf/pods/HannulaW22}). Another useful direction is to consider \e{soft} or \e{approximate} versions of the constraints, where it suffices to be consistent to some quantitative extent~\cite{DBLP:conf/icdt/CarmeliGKLT21,DBLP:conf/pods/JhaRS08,DBLP:journals/pvldb/LivshitsHIK20}. Finally, we have made the assumption of probabilistic independence among the cells as this is the most basic setting to initiate this research. To capture realistic correlations in the database noise, it is important to extend this work to data models that allow for (learnable) probabilistic dependencies, such as Markov Logic~\cite{DBLP:journals/pvldb/JhaS12}.

\newpage

\bibliography{main}

\end{document}